\documentclass{article}





\usepackage[nonatbib,preprint]{neurips_2020}

\usepackage{tikz}
\usepackage{gnuplot-lua-tikz}
\usepackage{mathtools}
\usepackage{array}
\usepackage{amsmath,amssymb,bbm}
\usepackage{color}
\usepackage{bm}
\usepackage[
	lambda,
	operators,
	advantage,
	sets,
	adversary,
	landau,
	probability,
	notions,
	logic,
	ff,
	mm,
	primitives,
	events,
	complexity,
	asymptotics,
	keys]{cryptocode}
\usepackage[shortlabels]{enumitem}
\usepackage{framed}
\usepackage{mdframed}
\usepackage{booktabs}
\usepackage{nicefrac}
\usepackage{comment}

\usepackage{dutchcal}
\usepackage[square,numbers]{natbib}
\bibliographystyle{abbrvnat}

\newtheorem{definition}{Definition}
\newtheorem{theorem}{Theorem}
\newtheorem{proof}{Proof}

%

\title{Adversarial Examples and Metrics}

\author{%
  Nico D\"ottling$^\dagger$, Kathrin Grosse$^\dagger$, Michael Backes$^\dagger$, Ian Molloy$^\ddagger$\\
  $^\dagger$ CISPA Helmholtz Center for Information Security\\
  $^\ddagger$ IBM T. J. Watson Research Center\\
  \{doettling, kathrin.grosse, backes\}@cispa.saarland, \\
  molloyim@us.ibm.com \\
}

\def \ms{\mathsf}

  
\def \point{\mathsf{\mathbf{x}}}  
\def \OtherPoint{\mathsf{\mathbf{z}}} 
\def \YetAnotherPoint{\overline{\mathbf{z}}} 

\newcommand{\feaNum}[0]{n} 

\newcommand{\learner}[0]{\mathsf{\mathbf{L}}} 

\newcommand{\classifier}[0]{\mathsf{\mathbf{C}}}  
\newcommand{\metric}{\mathsf{\Delta}} 

\newcommand{\weight}{\mathsf{\mathbf{w}}}
\newcommand{\weightElem}{\mathsf{w}} 
\newcommand{\mes}{\mathsf{m}}
\newcommand{\targetMSet}{\mathsf{T}} 
\def \bin{\{0,1\}}

\def \key{\ms{k}} 
\def \ciph{\ms{c}} 
\def \secp{\lambda}

\def \Hash{\ms{H}}

\def \Setup{\ms{Setup}} 
\def \Gen{\ms{Gen}}
\def \KeyGen{\ms{KeyGen}}
\def \Rec{\mathsf{Rec}}
\def \Ext{\ms{Ext}}
\def \Enc{\ms{Enc}}
\def \Dec{\ms{Dec}}

\def \paramEll{\ms{\ell}} 
\def \paramk{\ms{k}} 


\def \A{\mathcal{A}}





\def \pick{\leftarrow_\$}

\def \Hyb{\mathcal{H}}

\def \com{\mathsf{com}}

\def \witness{\mathsf{w}}

\def \Exp{\ms{Exp}}

\begin{document}
\maketitle
\begin{abstract}%
Adversarial examples are a type of attack on machine learning (ML) systems which cause misclassification of inputs. Achieving robustness against adversarial examples is crucial to apply ML in the real world. While most prior work on adversarial examples is empirical, a recent line of work establishes fundamental limitations of robust classification based on cryptographic hardness. Most positive and negative results in this field however assume that there is a fixed target metric which constrains the adversary, and we argue that this is often an unrealistic assumption.
In this work we study the limitations of robust classification if the target metric is uncertain. Concretely, we construct a classification problem, which admits robust classification by a small classifier if the target metric is known at the time the model is trained, but for which robust classification is impossible for small classifiers if the target metric is chosen after the fact. In the process, we explore a novel connection between hardness of robust classification and bounded storage model cryptography.

\end{abstract}

\section{Introduction}

\def \model{h}
\def \prob{\Pi}
\def \ECC{\mathbf{ECC}}
\def \Adv{\mathsf{Adv}}

A recent line of works~\cite{biggio2018wild,8406613,athalye2018obfuscated,carlini2017adversarial,chen2019towards,cohen2019certified,he2017adversarial,jia2020certified,lee2019tight,levine2020wasserstein,salman2019provably,sharma2017attacking,tramer2020adaptive} 
studies a class of attacks on machine learning systems commonly known as \emph{adversarial examples} or \emph{evasion attacks}.
Such attacks target a classifier $\classifier$ trained on problem $\prob$.
Assume for simplicity that $\prob$ has just two classes, 0 and 1.
When running an evasion attack $\A$, take a sample $\point$, say of class 0, and apply \emph{a small perturbation} to $\point$ yielding the adversarial example $\tilde{\point}$. The attack succeeds against $\classifier$ if $\classifier$ determines $\tilde{\point}$ to be in class $1$. In this case, we say $\A$ has fooled $\classifier$. On the other hand, we say that $\classifier$ robustly classifies the problem $\prob$ if any such evasion attack $\A$ fools $\classifier$ only with small probability.

\textbf{Choosing Metrics for Adversarial Examples} However, a critical aspect in modeling evasion attacks is how to define \emph{small perturbations}---for classification tasks that involve data one might require that a small perturbation should not be noticeable to a human observer. Making such a requirement formal can be somewhat tricky. We cannot allow the evasion adversary to perturb instances arbitrarily: If the adversary is allowed to replace instances of class $c$ with a well-formed instance of another class $c'$, then any good classifier has to determine as class $c'$, and therefore robust classification becomes an ill-defined task. Thus, we actually need to \emph{constrain} the adversary. Motivated by practical considerations, the go-to way of constraining the adversary is by defining a metric $\metric$ on the instance-space and assigning the adversary a \emph{perturbation budget} in this metric. For the example of image-classification, this metric may be something like the euclidean metric on vectors representing the images.
However, note that there is something arbitrary about fixing a metric such as the euclidean metric, as it is not clear that this metric captures all \emph{perceptible changes}. Indeed understanding human perception~\cite{glass1976pattern} and defining aligned metrics~\cite{DBLP:journals/tip/WangBSS04,zhang2018unreasonable} are still an open research questions.
Possibly also related to this question, there is an ongoing arms-race when it comes to defend adversarial examples~\citep{athalye2018obfuscated,carlini2017adversarial,he2017adversarial,sharma2017attacking,tramer2020adaptive}. 

\paragraph{General Impossibility of Robust Classification}

This motivates the question whether \emph{efficient} robust classification is possible for any classification task and any reasonable metric, or whether there are problems for which efficient robust classification is impossible in principle. 
Several recent works~\citep{bubeck2018adversarial,DBLP:conf/colt/DegwekarNV19,mahloujifar2018can} have demonstrated that under some mild cryptographic assumptions the latter is the case: there is no silver bullet against evasion attacks. 
The problems constructed by these works can be outlined as follows. 
Take an unlearnable problem $\prob$ with classes $0$ and $1$ and an instance space $\mathcal{X} \subseteq \{0,1\}^n$. 
Such problems can for example be constructed from pseudorandom functions~\cite{GoldreichGM84}. 
As target metric consider the standard hamming metric. 
Now define an error correcting code $\ECC: \bin^n \to \bin^m$ that is  efficiently encodable and decodable and can correct a few bit-errors in the hamming metric. 
Consider the following problem $\prob'$: Instances $\point'$ for class $c \in \bin$ are obtained by generating an instance $\point$ for class $c$ of problem $\prob$ and setting $\point' = (\ECC(\point),c)$. 
That is, $\point'$ consists of an error correcting encoding of $\point$, and the last bit 
of $\point$ is identical to the class $c$. Non-robust classification of $\prob'$ is easy: A non-robust classifier can decide solely on the bit $c$, which is included in the instance $\point'$. On the other hand, we can show that efficient robust classification of $\prob'$ is impossible. Consider an evasion adversary which just flips the last bit of $\point'$, i.e. the bit which signals the class of the instance. Call the perturbed instance $\tilde{\point}'$ and note that $\metric(\point',\tilde{\point}') = 1$, i.e. their distance is 1 (and therefore small) in the hamming metric. Unlike the non-robust classifier a robust classifier cannot rely on the last bit of $\tilde{\point}'$ to classify the instance, but needs to resort to $\ECC(\point)$ to classify the instance. 
Consequently, it can be shown that an \emph{efficient} robust classifier for $\prob'$ immediately yields an efficient classifier for $\prob$, which however contradicts the unlearnability of $\prob$. We can conclude that there exists no \emph{efficient} robust classifier for $\prob'$. On the other hand, note that robust classification of $\prob'$ is well-defined as there exists an \emph{inefficient} robust classifier, which decodes $\ECC(\point)$ and classifies based on $\point$.

\paragraph{Robust Classification via Randomized Smoothing}

On the other hand, many natural classification tasks are robust against a moderate amount of \emph{random noise}~\citep{shalev2014understanding}. Recently, \cite{cohen2019certified} demonstrated that robustness against random noise can be leveraged to achieve a certain amount of provable robustness against adversarial perturbations. The idea of this approach is to add additional random noise on a given sample before classification. Given that the amount of random noise is sufficiently large, any adversarial perturbation is \emph{smoothed out} by the random noise term. Follow up work has extended this approach to more noise distributions~\citep{lee2019tight,levine2020wasserstein}, to community detection tasks in graphs~\citep{jia2020certified}, and empirically improved the observed bounds via adversarial training \citep{salman2019provably}. Thus, while on the one side we know that there exists problems which in principle do not admit robust classification, a fairly natural class of problems can be robustly classified, namely problems which admit noise-tolerant classification \emph{in the same metric}.

\paragraph{The Choice of the Metric}

All of the above works establishing both positive and negative results have one aspect in common: The metric which constrains the perturbation adversary is both fixed and publicly know. 
However, in practice, the exact metric which characterizes the adversary's budget is not precisely known. 
As an example consider again perturbations of images which that are undetectable for a human observer. 
Such a perturbation might consist of modifying a few pixels locally, but also shifting the image by a small amount or rotating it slightly.
Consequently, there is no single metric which captures this kind of perturbation exactly. 
Leaving aside the question if the correct metric can be characterized 
or if it will be known at training time,
 we need to account for \emph{uncertainty in the choice of the target metric}. Several recent works have demonstrated that this is not just a hypothetical concern. Sharma and Chen~\cite{sharma2017attacking} for example demonstrated that changing the metric of a defense will break it.
 In general, extending an attack to a new metric is relatively straight-forward~\cite{8845708,wong2019wasserstein,Xiao:2018up}. This motivates the following question:
\begin{center}
\emph{What are the principal limitations in achieving robust classification if the choice of the correct target metric is uncertain?}
\end{center}

\subsection{Our Contributions}

\def \task{\mathcal{T}}
\def \class{\mathcal{C}}

In this work, we initiate the systematic study of this question. 
That is, motivated by the examples and empirical attacks above, in which determining the \emph{true} target metric constitutes a somewhat ill-posed problem, we investigate how uncertainty about the \emph{proper} metric affects the feasibility of robust classification.
More concretely, we will consider a setting in which there is not just a single metric which constrains the budget of the adversary, but rather some metric in an entire class $\class$ of potential target metrics. 
We investigate whether there are learning tasks for which simultaneously
\begin{enumerate}
    \item Robust classification is possible if the precise target metric is known.
    \item Robust classification is impossible if the target metric is adversarially chosen from a class of metrics after the model has been trained.
\end{enumerate}

We will phrase our results in the PAC Learning Model~\cite{Valiant84}, where a learning algorithm is given labeled samples and produces a model $\model$, and later a classification algorithm $\classifier_\model$ is tasked with determining the class of a given sample $\point$. 
Robust classification for PAC learning is defined analogously as in the introduction.

Assume that a learning task admits property 1 above, that is if the learning algorithm is provided with the target metric then it can train a robust classifier in this metric. Now consider a trivial learning algorithm $\learner_0$ which stores the entire training data in the model $\model_0$ and essentially defers the learning phase to the classifier $\classifier_{\model_0}$. Once the learning algorithm is provided an adversarial example $\tilde{\point}$ and, for the sake of this outline, the target metric, the classifier $\classifier_{\model_0}$ can train a model \emph{on the fly} by running the robust learning algorithm for the target metric on the training data provided in $\model_0$. Consequently, this classifier $\classifier_{\model_0}$ will be robust in every metric in the class contradicting point 2. Thus, we need to pose a non-triviality condition on the size of the model $\model$, restricting it to be significantly smaller in size than the training data provided to the learner $\learner$.
Such a size restriction is far from exotic, as obtaining small models has always been desirable in the ML community~\cite{crowley2018pruning,frankle2018lottery,han2015learning,luo2017thinet}. 
To summarize, the question we consider is only meaningful if the size of the model $\model$ is suitably bounded. Having laid out these boundary conditions, we can now describe our results.

\begin{theorem}[Informal]
Let $\secp$ be a security parameter and let $n,\ell$ be integers, possibly depending on $\secp$, where $\ell \gg n,\secp$. Under mild cryptographic assumptions, there exists a binary learning problem $\prob$ and a class $\class$ of metrics for which
\begin{enumerate}
    \item Samples are of (small) size $n \cdot \secp$
    \item Robust classification of $\prob$ is possible with (small) models of size $\ell$ if the precise target metric is known to the learning algorithm $\learner$.
    \item Robust classification is possible with (large) models of size $n \cdot \ell$ for any metric in $\class$.
    \item There exists an \emph{efficient} adversary $\A$, which fools every efficient classifier with models of size $ < n/2 \cdot \ell$ for an adversarially chosen target metric in $\class$.
\end{enumerate}
\end{theorem}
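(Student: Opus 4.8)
\emph{Proof strategy (sketch).}

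The plan is to realize the metric uncertainty ``multiplicatively'': partition the instance into $n$ blocks, hide the label redundantly in every block via an incompressible (pseudo)random function, and let the metric class be $\class=\{\metric_1,\dots,\metric_n\}$ where $\metric_i$ ``freezes'' block $i$ (puts weight $\gg n\secp$ on its coordinates, so the budget cannot alter it) while leaving all other coordinates almost free to perturb.

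Concretely, fix for each $i\in[n]$ a uniformly random $\modell_i\colon D_i\to\{0,1\}$ on a domain $D_i\subseteq\{0,1\}^\secp$ of size $\ell$ (these are revealed to the learner through the training data; under a one-way-function assumption one may instead take $\modell_i$ to be a pseudorandom function so that $\prob$ has a short description --- this is the only place cryptography enters). An instance of class $c$ is $\point=(c,\point_1,\dots,\point_n)$ with $\point_i$ uniform in $\modell_i^{-1}(c)\cap D_i$, so $c=\modell_i(\point_i)$ for every $i$, the leading bit is a ``naive'' copy of $c$, and $|\point|=n\secp+1$, giving point~1. For point~2: told the metric is $\metric_i$, the learner draws $\tilde{O}(\ell\secp)$ samples, which by coupon collecting reveal $\modell_i$ on all of $D_i$; it stores this $\ell$-bit table and the classifier outputs $\modell_i(\point_i)$, which is robust because block $i$ is frozen under $\metric_i$. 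For point~3: not told the metric, the learner instead learns all of $\modell_1,\dots,\modell_n$ into an $n\ell$-bit model; at test time the metric $\metric_i$ is known, so the classifier reads block $i$ and outputs $\modell_i(\point_i)$ --- it is crucial here that a classifier facing $\metric_i$ may depend on $\metric_i$ even though the learner could not.

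The core is point~4. Any model $\model$ with $|\model|<\tfrac{n}{2}\ell$ is a function of the training data and hence of $(\modell_1,\dots,\modell_n)$, so $H(\modell_1,\dots,\modell_n\mid\model)\ge n\ell-|\model|>\tfrac{n}{2}\ell$; since the $\modell_i$ are independent, a counting/mutual-information argument then shows that for a constant fraction of pairs $(i,x)$, $x\in D_i$, the bit $\modell_i(x)$ is essentially unpredictable from $\model$ --- this ``a bounded store cannot retain enough of $n$ independent blocks to cover half of them'' step is the bounded-storage-model flavour of the construction. The efficient adversary $\A$ picks $i\in[n]$ uniformly, fixes a decoy pattern (from a fresh sample), and on a target $\point$ of class $c$ outputs $\tilde\point$ that keeps block $i$ and overwrites the naive bit and every block $j\ne i$ with the decoy; this stays within the $\metric_i$-budget, and because block $i$ is frozen one checks that $\tilde\point$ lies in the $\metric_i$-ball of genuine instances of class $\modell_i(\point_i)=c$ only, so robust classification is well-defined with unique correct answer $\modell_i(\point_i)$. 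But the classifier's output on $\tilde\point$ is a function of $\model$ and $\point_i$ alone, and for a constant fraction of $(i,\point_i)$ it cannot recover $\modell_i(\point_i)$ from $\model$; hence it errs with constant probability, contradicting robustness.

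The step I expect to be the main obstacle is making points~3 and~4 coexist: a metric weak enough to let a small model be defeated (freeze a block it does not know) must still be strong enough that the large model wins. This is exactly why the classifier is permitted to depend on the test-time metric and why the label is made fully redundant across blocks --- with a single metric-oblivious classifier the adversary could flood $n-1$ blocks with a coherent wrong-class pattern and outvote the one frozen block, and an authentication layer binding blocks to instances would not by itself stop such flooding, so the metric dependence seems essential. The remaining care is the quantitative entropy-to-error conversion in point~4, together with verifying via the freezing property that the two classes' budget-balls around the relevant instances are disjoint, so that robust classification is well-posed throughout.
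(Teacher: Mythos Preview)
Your construction takes a genuinely different route from the paper's, and the core entropy argument you sketch is sound; but the two approaches establish slightly different statements, and the difference is exactly the issue you flag at the end.

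The paper instantiates $\prob$ with a \emph{big-key public-key encryption scheme} satisfying a \emph{key-knowledge} security notion: a sample of class $b$ is $(\Enc(\pk_1,b),\dots,\Enc(\pk_n,b))$, each $\sk_i$ is a uniform $\ell$-bit string, and the metric class consists of all $\metric_\targetMSet$ with $|\targetMSet|>n/2$, i.e.\ every metric freezes a \emph{majority} of the coordinates. This is precisely what makes point~3 go through with a \emph{metric-oblivious} classifier: holding all $n$ secret keys, one decrypts every coordinate and takes a majority vote, which is correct under any $\metric_\targetMSet\in\class$. For point~4 the paper's adversary is \emph{adaptive}: it queries $\classifier_\model$ on hybrid distributions that flip one coordinate at a time, identifies the coordinates on which the classifier is sensitive, and argues via the key-knowledge extractor that each sensitive coordinate forces the model to encode the full $\ell$-bit $\sk_i$; Shannon incompressibility then caps the number of sensitive coordinates at $|\model|/\ell<n/2$, so the adversary flips exactly those and chooses $\targetMSet$ to protect the (insensitive) rest.

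Your version freezes only one block per metric, uses a non-adaptive random choice of $i$, and relies on a PRF rather than key-knowledge encryption. This is considerably more elementary --- your compression/Fano step really does give constant error for a random $(i,\point_i)$ when $|\model|<n\ell/2$, and with only $n$ metrics the adversary can even be made adaptive by testing all of them. The price is the modeling concession you identify: because only one block is frozen, majority voting is defeated by flooding $n-1$ blocks with a coherent decoy, so your point~3 classifier must be told the metric at test time. The paper avoids this concession by design of $\class$ (freeze a majority, so a single metric-oblivious classifier handles all of $\class$), but then pays with the heavier primitive: an adaptive adversary against a metric-oblivious classifier must \emph{efficiently locate} the sensitive coordinates, and key-knowledge security is what converts ``sensitive in coordinate $i$'' into ``$\model$ encodes all of $\sk_i$'' --- a purely information-theoretic bound on $|\model|$ would not give this, since partial leakage of a key could in principle already enable distinguishing. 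So your approach buys simplicity and a stronger impossibility (it breaks even metric-aware classifiers), while the paper's buys a stronger feasibility (point~3 holds without the classifier seeing the metric); the two theorems are formally incomparable, and which one matches the informal statement depends on whether one reads ``for any metric in $\class$'' as one classifier per metric or one classifier for all.
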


It is instructive to think of the size parameter $\ell$ as significantly larger than all other parameters. In concrete terms, an exemplifying parameter choice is $n \approx 10^4$, $\secp = 10^3$, $k = n/4$ and $\ell = 10^{10}$. For this parameter set, the samples are of reasonable size $10^7$, a classifier with robustness against a single target metric is of size $10^{10}$, but no classifier of size smaller than $2.5 \cdot 10^{12}$ is robust against an adversarially chosen metric.

\paragraph{Perspective}

Our results show that there are learning tasks for which the only viable strategy for robust classification is to essentially include the entire training data in the model if the target metric is not precisely known. Such a classifier however fails at the essential task of condensing the information provided in the training dataset, and fails to generalize beyond the training data. Thus, the goals of compactness and robustness in an uncertain metric are fundamentally at odds.

On a technical level, we demonstrate a novel way of leveraging techniques that originate from bounded-storage model cryptography~\cite{Maurer92,CachinM97,CachinCM98,damgaard2007tight,dziembowski2002tight} to establish lower bounds on the size of the model in robust classification. Our approach deviates from prior results in this line of research and we expect it to be applicable in other settings.

\section{Technical Outline}
We will now provide an overview of our construction. The full construction with all proofs is provided in the Appendix.

\paragraph{The PAC Model} We briefly recall the Probably Approximately Correct (PAC) Learning framework~\cite{Valiant84}. A learning problem $\prob$ consists of an instance space $\mathcal{X}$, a set of classes $\mathcal{C}$ and a set of distributions $\{ \chi_c \}_{c \in \mathcal{C}}$, where each $\chi_c$ is supported on $\mathcal{X}$. We say that for a class $c \in \mathcal{C}$ $\chi_c$ samples instances of class $c$. For concreteness, we will only consider learning problems with two classes, i.e. $\mathcal{C} = \bin$. We allow the problem $\prob$ to be parametrized by a \emph{secret state} $\st$, such that one can efficiently sample from the distributions $\chi_b$ given the state $\st$. We say that $\prob$ is \emph{learnable} if there exist PPT algorithms $\learner$, called the learner, and $\classifier$, called the classifier such that the following holds. The learner $\learner$ is given labeled samples $(\point_i,b_i)$ of $\prob$, where $\point_i \pick \chi_{b_i}$, and produces a small \emph{model} $\model$. Then, the classifier $\classifier_\model$, parametrized by $\model$, is challenged with determining the class $b$ of a given sample $\point$. We define the \emph{advantage} of $\classifier_\model$ by $\Adv_\prob(\classifier_\model) = \Pr[\classifier_\model(\point) = b] - \frac{1}{2}$, where the probability is taken over the random choice of both $b \pick \bin$ and $\point \pick \chi_b$. We say that $(\learner,\classifier)$ $(\epsilon,\delta)$-PAC learns a problem $\prob$, if $\Pr[\Adv_\prob(\classifier_\model) > \epsilon] > 1 - \delta$, where the probability is taken over the choice of the training data and the random coins of $\learner$. We say that a perturbation adversary $\A$ fools a classifier $\classifier_\model$ with advantage $\epsilon$, if it hold that $\Pr[\classifier_\model(\A(\point)) = 1 - b] \geq \frac{1}{2} + \epsilon$, where the probability is taken over the random choice of $b \pick \bin$ and $\point \pick \chi_b$.

\paragraph{Weighted Hamming Metrics}
As described above, a crucial aspect of our work is that the metric which constrains the adversary is not fully specified at the time the model is trained. We will consider a simple but quite expressive class of metrics we call \emph{weighted hamming metrics}. We visualize this idea in Figure~\ref{fig:metricExample}, and continue with the formalization. Fix a finite alphabet $\Sigma$ and an integer $\feaNum$. For a vector $\point \in \Sigma^{\feaNum}$, we will call the components $\point_i$ of $\point$ the \emph{features} of $\point$. The \emph{Hamming metric} $\metric$ is defined by $\metric(\point,\OtherPoint) = \sum_{i = 1}^{\feaNum} 1_{\point_i \neq \OtherPoint_i}$ for all $\point,\OtherPoint \in \Sigma^{\feaNum}$. Here, $1_{\point_i \neq \OtherPoint_i}$ is an indicator function which assumes the value $1$ if $\point_i \neq \OtherPoint_i$ and $0$ if $\point_i = \OtherPoint_i$. We will augment the notion of Hamming metrics to \emph{weighted} Hamming metrics by introducing weights to the features. Let $\weight \in \mathbb{R}_{> 0}^{\feaNum}$ be a positive real vector. We define the weighted Hamming metric $\metric_\weight$ by
\[
\metric_\weight(\point,\OtherPoint) = \sum_{i = 1}^{\feaNum} \weightElem_i \cdot 1_{x_i \neq z_i}
\]
for all $\point, \OtherPoint \in \Sigma^{\feaNum}$. First note that $\metric_\weight$ is in fact a metric, i.e. if $\metric_{\weight}(\point,\OtherPoint) = 0$ then $\point = \OtherPoint$ and $\metric_{\weight}(\point,\mathbf{\YetAnotherPoint}) \leq \metric_{\weight}(\point,\OtherPoint) + \metric_{\weight}(\OtherPoint,\YetAnotherPoint)$ for all $\point,\OtherPoint,\YetAnotherPoint \in \Sigma^{\feaNum}$. 

Weighted Hamming metrics allow us to weigh features differently, that is, perturbing features with a high weight will be more costly for the adversary than perturbing features with small weights. To simplify matters, we will normalize the adversary's attack budget to $1$, i.e. a perturbation $\tilde{\point}$ of a sample $\point$ is permitted by metric $\metric_{\weight}$ if $\metric_{\weight}(\point,\tilde{\point}) < 1$. 

 \begin{figure}[t]
     \centering
     \input{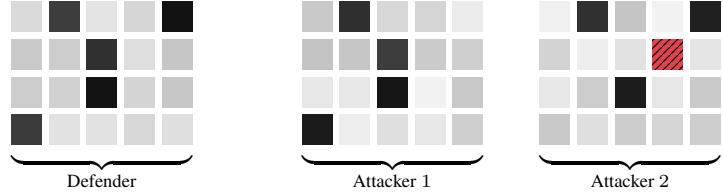}
     \caption{A simplified example of our construction.
 The data is drawn from $\mathbb{R}^{4 \times 5}$, dark squares are features where $\weight_i$ is large or $i \in \targetMSet$.
 These are defended features for the defender. \emph{Attacker 1}'s feature were known, she cannot mislead the defender's classifier.
 \emph{Attacker 2} operates under a different metric, and is able to alter a feature that the classifier is not robust in (depicted in red/hatched). Consequently, the defender is vulnerable in attacker 2's metric.}
     \label{fig:metricExample}
 \end{figure}

While weighted Hamming metrics are quite expressive in their ability to assign weights to features, we will use them in a simplified manner. Specifically, we will assign the weight $\weight_i = 1$ to a feature to \emph{protect} it, i.e. such a feature cannot be modified by the adversary. All remaining weights $\weight_i$ will be chosen as suitably small reals. More specifically, fix an integers $\feaNum$ and $\paramk$. For a subset $\targetMSet \subseteq [\feaNum]$ of size $\paramk$, define the weight $\weight_{\targetMSet}$ by 
\[
\weightElem_i = \begin{cases} 1 & \text{ if } i \in \targetMSet \\ 1/\feaNum & \text{ otherwise} \end{cases}.
\]

For simplicity, write $\metric_\targetMSet = \metric_{\weight_{\targetMSet}}$. It follows immediately from the definition of $\metric_\targetMSet$ that if $\point$ and $\tilde{\point}$ differ in a feature $i \in \targetMSet$, that is if $\point_i \neq \tilde{\point}_i$, then $\metric_\targetMSet(\point,\tilde{\point}) \geq 1$. As a consequence, any perturbation of a sample $\point$ on a feature with index $i \in \targetMSet$ will exceed the adversaries budget. Consequently, we can consider the features with indices in $\targetMSet$ as \emph{protected}. The class of metrics we consider in our constructions will be
\[
\class = \{ \metric_\targetMSet | \targetMSet \subseteq [\feaNum], |\targetMSet| = t \},
\]
i.e. every set $\targetMSet \subseteq [\feaNum]$ of size $t$ will give rise to a metric. 

\paragraph{The Basic Construction}

We will now describe a simplified version of our learning problem $\prob$. The problem has two classes, labeled 0 and 1. Let $b \in \{0,1\}$. For this construction we will use a private key encryption scheme $(\KeyGen,\Enc,\Dec)$. A sample of class $b$ consists of three encryptions $(\Enc(\key_1,b),\Enc(\key_2,b), \Enc(\key_3,b))$ of the bit $b$ under three different keys $\key_1,\key_2,\key_3$. The secret state $\st$ of the problem $\prob$ consists of the three \emph{feature keys} $\key_1,\key_2,\key_3$. In this bare-bones version, this classification problem $\prob$ is obviously unlearnable, given that the encryption scheme is secure. Thus, we will first consider a simplified setting in which the learning algorithm $\learner$ is given access to the secret state $\st = (\key_1,\key_2,\key_3)$ instead of samples from $\prob$. 

We will briefly argue that this simplification does not weaken the model. 
There is a simple transformation which augments the samples by a small amount of extra information and make the problem $\prob$ PAC learnable. The idea is to add small \emph{shares} of the secret state $\st = (\key_1,\key_2,\key_3)$ into every sample. To achieve this we will use a variant of Shamir's secret sharing~\cite{shamir1979share}. More concretely, let $\mathbb{F}$ be a finite field of size $2^\secp$ and assume we can represent $\st$ as a vector $(s_1,\dots,s_t)$ for $t = 3 \ell / \secp$, where each $s_i \in \mathbb{F}$. Define the polynomial $f(X) = \sum_{i = 1}^t s_i X^{i-1}$. Now include into every sample $\point$ the pair $(z,f(z))$, where $z \pick \mathbb{F}$ is chosen uniformly at random. Note that since both $z$ and $f(z)$ are in $\mathbb{F}$, the pair $(z,f(z))$ can be described using $2 \secp$ bits and is therefore small. Observe that given $t$ pairs $(z_1,f(z_1)),\dots,(z_{t},f(z_{t}))$ for distinct $z_i$ we can interpolate the polynomial $f$ and recover $(s_1,\dots,s_t)$ and therefore $\st = (\key_1,\key_2,\key_3)$. Since the $z_i$ are chosen uniformly at random from $\mathbb{F}$, the $z_1,\dots,z_{t}$ will be distinct except with negligible probability. It follows that $t$ samples are sufficient to recover $\st = (\key_1,\key_2,\key_3)$.

Consequently, we will henceforth only consider the simplified setting in which the learning algorithm is provided $\st = (\key_1,\key_2,\key_3)$ as input. Given the secret state $\st$, there is a simple learning algorithm for the problem $\prob$: Just pick the first key $\key_1$ and include it in the model $\model$. Then, given a sample $\point = (\ciph_1,\ciph_2,\ciph_3)$, a classifier $\classifier_\model$ can use $\key_1$ to decrypt $\ciph_1$ and obtain the class $b$. Obviously, so far there is no mechanism in place which disincentivizes the learner to include all 3 keys $\key_1,\key_2,\key_3$ in the model $\model$. Our main idea is to make the keys $\key_1,\key_2,\key_3$ \emph{very large} in order to penalize storing storing all of them in $\model$. Assume for now that each key $\key_i$ is of size $\ell \gg \feaNum$, that is storing even a single key is costly.

Now we will turn to robust classification of this problem. As the class of metrics we will consider the class $\class$ defined above, that is the metrics are of the form $\metric_\targetMSet$ for a set $\targetMSet \subseteq [3]$ of size 2. Since by construction of $\metric_\targetMSet$ the features with index $i \in \targetMSet$ are protected, this leaves just a single feature with index in the singleton set $[3] \backslash \targetMSet$ which the adversary is allowed to perturb. Note that a classifier $\classifier_\model$ in possession of all 3 keys $\key_1,\key_2,\key_3$ will be able to robustly classify this problem for any metric in the class as follows. Since every metric in the class constrains the adversary to perturbing just a single feature, we have the guarantee that 2 out of the 3 features are unmodified. Thus, given an instance $\point = (\ciph_1,\ciph_2,\ciph_3)$ $\classifier_\model$ decrypts $\ciph_1,\ciph_2$ and $\ciph_3$ obtaining bits $b_1,b_2,b_3$, and sets the bit $b$ to the majority of $b_1,b_2,b_3$. Since at most one of the $b_i$ is perturbed, $\classifier_\model$ will classify correctly.

Next assume that the model $\model$ is just big enough to store a single key $\key_i$ for $i \in [3]$. If the learner $\learner$ knows the target metric $\metric_\targetMSet$ in advance, it can choose $i$ such that $i \in \targetMSet$, i.e. the feature with index $i$ is protected and thus the key $\key_i$ is sufficient to classify robustly in the metric $\metric_\targetMSet$. But now assume that the learner does not know the target metric and the model can only store a single key. Thus, the learning algorithm needs to \emph{commit} which feature $i^\ast$ the classifier will inspect by selecting a key $\key_{i^\ast}$, and this decision cannot be altered after the fact. Thus, if the target metric $\metric_\targetMSet$ is such that $i^\ast \notin \targetMSet$, then we can construct a perturbation adversary $\A$ which fools \emph{any} classifier into misclassifying perturbed examples as follows. Given a sample $\point = (\ciph_1,\ciph_2,\ciph_3)$ for class $b$, $\A$ outputs a sample $\tilde{point} = (\ciph'_1,\ciph'_2,\ciph'_3)$ where $\ciph'_{i^\ast} = \Enc(\key_{i^\ast},1 - b)$ but $\ciph'_j = \ciph_j$ for $j \neq i^\ast$. Observe that $\metric_\targetMSet(\point,\tilde{\point}) = 1/3$, thus this modification is within the adversary's budget. Now, given the perturbed example $\tilde{\point}$, since the classifier only knows the key $\key_{i^\ast}$, it must base its decision solely on the feature $\ciph_{i^\ast}$ as by the security of the encryption scheme the contents of the other two ciphertexts are hidden from the classifier's view. Thus, from the classifier's view, $\tilde{\point}$ looks like a legit sample of class $1-b$, and it will consequently misclassify $\tilde{\point}$ as class $1-b$. In this simplified example there are only 3 possible choices for the index $i^\ast$, so the index $i^\ast$ can be found by brute force search and testing for which $i^\ast$ the classifier misclassifies. Hence, if the adversary is allowed to choose the target metric adaptively depending on the classifier $\classifier_\model$, there is an attack which perfectly fools the classifier, under the condition that the classifier only knows a single key $\key_{i^\ast}$. In this (over-)simplified analysis, we conclude that for an after-the-fact chosen target metric no classifier that takes a model $\model$ of size $\ell$ is robust, whereas we have seen above that if the target metric is known to the learner size $\ell$ suffices.

The high-level approach we have outlined here critically relies on the fact that the learning algorithm $\learner$ can only provide a bounded amount of information to the classifier $\classifier$ via the model $\model$. Leveraging memory limitations to establish security properties is a well-established research topic in cryptography. The \emph{bounded storage model}~\cite{Maurer92} admits \emph{unconditionally secure} protocols for tasks such as key-exchange~\cite{Maurer92,CachinM97,dziembowski2002tight} and secure two-party computation~\cite{CachinCM98,damgaard2007tight}, which are known to require computational assumptions in the standard model. In the same spirit as prior works that establish lower bounds~\cite{bubeck2018adversarial,DBLP:conf/colt/DegwekarNV19,mahloujifar2018can}, we make use of cryptography against the learning algorithm and classifier. But while prior works made use of cryptographic constructions that are secure against all efficient algorithms, our goal is to only establish hardness results when the model $\model$ is of bounded size.

\paragraph{The Full-Fledged Construction}

In general, we want to achieve a larger gap between the two cases. We will achieve this by modifying the problem in the following way. Instead of having just 3 features, instances of the new problem $\prob$ will have $\feaNum$ features. Furthermore, for technical reasons we will rely on a public key encryption scheme $\KeyGen,\Enc,\Dec$ rather than a private key encryption scheme. Specifically, to make our argument work we need that the adversary $\A$ can generate samples of the problem $\prob$ \emph{without} knowledge of the entire large secret state $\st = (\sk_1,\dots,\sk_\feaNum)$ but only knows small secret keys $\pk_1,\dots,\pk_\feaNum$. Thus, we will require a public key encryption scheme $(\KeyGen,\Enc,\Dec)$ which has small public keys $\pk$, but arbitrarily larger secret keys $\sk$.

For our fully-fledged construction, an instance $\point$ of the problem $\prob$ is of the form $\point = (\ciph_1,\dots,\ciph_\feaNum)$, where each $\ciph_i \gets \Enc(\pk_i,b)$ is an encryption of the class $b$ under a public key $\pk_i$. We choose the class $\class$ to consist of all metrics $\metric_\targetMSet$, where $\targetMSet \subseteq [\feaNum]$ is a set of size $\feaNum/2 + 1$. This choice of $\class$ ensures that there always is a robust classifier with model $\model$ of size $\feaNum \cdot \ell$, namely the classifier in possession of all keys $\sk_1,\dots,\sk_\feaNum$ which decrypts all $\ciph_i$ and makes a majority decision.

We now want to argue that whenever the model $\model$ is of size at most $\feaNum/2 - 1$, then for every classifier $\classifier$ there exists an \emph{efficient} adversary $\A$ which chooses a target metric $\metric_\targetMSet$ and fools $\classifier_\model$ in this metric. Recall that we allow the adversary $\A$ to make oracle-queries to $\classifier_\model$. Thus, we essentially need to construct an adversary which learns which keys $\classifier_\model$ knows by only making oracle access to $\classifier_\model$. To actually realize this idea, we will need an encryption scheme with stronger guarantees, as we will discuss in the next paragraph.

\paragraph{Big-Key Encryption}

The argument sketched so far is overly simplistic in several aspects. We assumed that the only strategy of the learner and classifier is to store the secret keys $\sk_i$ in full. In general, the standard security notion for public key encryption, indistinguishability under chosen plaintext attacks (IND-CPA security) does not provide any guarantees if the adversary is given even a small fraction of the secret key $\sk$. Recall that the learning algorithm $\learner$ is given all secret keys $\sk_1,\dots,\sk_\feaNum$ as input and thus the model $\model$ may contain a small amount of information about each of these secret keys. Thus, constructing an encryption scheme which merely has large secret keys is insufficient to force a learning algorithm to dedicate a large amount of the model $\model$ to store such keys. In particular, requiring that the secret keys are large does not preclude that there is an alternative decryption procedure which requires a significantly smaller amount of information about the secret key. Consequently, we need a stronger security property which captures the requirement that keys are large and incompressible and that missing even a small fraction of the key will render a partial key useless for decryption. 

The notion of big-key encryption~\cite{BellareKR16} offers a strong form of leakage resilience. In such a scheme, the key-generation algorithm $\KeyGen$ takes as additional input a size parameter $\ell$ and produces \emph{uniformly random} keys of size $\ell$. We note that whereas~\cite{BellareKR16} defined big-key encryption in the private key setting, we will use an analog notion for public key encryption, where we require that secret keys are very large, but both the size public keys and ciphertexts are small, only depending on the security parameter. Big-key encryption was conceived to provide strong leakage resilience guarantees and to prevent \emph{key-exfiltration attacks}. In \cite{BellareKR16} security of big-key encryption is defined via a notion called \emph{subkey prediction security}. We will use a conceptually somewhat simpler notion we call \emph{key-knowledge security}. We will briefly outline this security notion.

Recall that the standard security notion of public key encryption, indistinguishability under chosen plaintext attacks (IND-CPA security) requires that encryptions of 0 and 1 are indistinguishable for PPT distinguishers, given only the public key. An important aspect about IND-CPA security is that that the distinguisher gets no information about the secret key.

\def \leaker{\mathcal{L}}
\def \distinguisher{\mathcal{D}}
\def \extractor{\mathcal{E}}
\def \hint{\mathsf{h}}

We will define key-knowledge security via the following two stage experiment between a challenger and a pair $(\leaker,\distinguisher)$ of leaker $\leaker$ and distinguisher $\distinguisher$. The challenger generates keys $(\pk,\sk) \gets \KeyGen(1^\secp,\ell)$ and runs the leaker $\leaker$ on input $(\pk,\sk)$, an $\leaker$ will output a hint $\hint$. The distinguisher $\distinguisher_\hint$ (parametrized by $\hint$) is given the public key $\pk$ and an encryption of a random bit $b$ as input and outputs a bit $b'$. We define the advantage of $\distinguisher_\hint$ as $\Adv(\distinguisher_\hint) = |\Pr[b' = b] - 1/2|$.

We say that a big-key encryption scheme $(\KeyGen,\Enc,\Dec)$ is \emph{key-knowledge} secure, if for every pair of PPT algorithms $(\leaker,\distinguisher)$ there exists a PPT algorithm $\extractor$, called the extractor, such that the following holds. For every inverse-polynomial $\epsilon = \epsilon(\secp)$, we require that if $\Adv(\distinguisher_\hint) > \epsilon$, then $\extractor(\hint,\epsilon) = \sk$, except with negligible probability over the choice of $(\pk,\sk) \gets \KeyGen(1^\secp)$ and $\hint \gets \leaker(\pk,\sk)$. That is
\[
\Pr_{\key,\hint}[\Adv(\distinguisher_\hint) > \epsilon \text{ and } \extractor(\hint,\epsilon) \neq \sk] < \negl
\]
We allow the runtime of $\extractor$ to be $\poly[\secp,1/\epsilon]$. In essence, this security notion requires that if for a given hint $\hint$ the distinguisher $\distinguisher_\hint$ is able to distiguish encryptions of 0 and 1, then the hint $\hint$ must somehow encode the secret key $\sk$. Note in particular that this notion does not impose a size restriction on the hint $\hint$.

In the Appendix of this work we provide a construction of public key \emph{key-knowledge} secure big-key encryption in the standard model. Our construction is based on a recent construction of maliciously secure laconic conditional disclosure of secrets (laconic CDS)~\cite{DottlingGGM19}, which builds heavily on the distinguisher-dependent simulation technique~\cite{DworkNRS03,KKR17,DottlingGHMW20}. In particular, we can provide constructions of key-knowledge secure big-key encryption under standard assumptions such as the Decisional Diffie Hellman assumption~\cite{Gamal84} or the Learning with Errors assumption~\cite{Regev05}. We will omit the details of the constructions for this overview.

\paragraph{Learning the Classifier's keys}

Equipped with the notion of key-knowledge secure big-key encryption, we will complete this outline by showing that if the target metric is unknown to the learner for problem $\prob$, then there exists an \emph{efficient attack} $\A$ which fools every classifier $\classifier_\model$ with model of size at most $n/2 \cdot \ell$ in an adaptively chosen metric in the class $\class$. Recall that we allow the adversary $\A$ to make oracle queries to the classifier $\classifier_\model$. The underlying idea our attack is based on is that $\A$ can \emph{detect} which keys $\sk_i$ the classifier $\classifier_\model$ knows by making oracle access to $\classifier_\model$. This high-level idea is implemented as follows. For concreteness, we will first discuss how $\A$ can detect whether the classifier knows the key $\sk_1$. Assume that $\Adv_\prob(\classifier_\model) = \epsilon$ for some $\epsilon > 0$ and set $\gamma \ll \epsilon$, where we will determine the exact choice of $\gamma$ later.

Consider a modified problem $\prob'$, which slightly differs from $\prob$ in the way instances are sampled. To sample an instance $\point = (\ciph_1,\dots,\ciph_\feaNum)$ of class $b \in \bin$ for $\prob'$, compute $\ciph_1 \gets \Enc(\pk_1,1 - b)$ and $\ciph_i \gets \Enc(\pk_i,b)$ for all indices $i \neq 1$. That is $\ciph_1$ encrypts the flipped bit $1 - b$, whereas all other ciphertexts $\ciph_i$ encrypt the bit $b$. Thus, instances of $\prob$ and $\prob'$ differ in the first feature $\ciph_1$.

Note that given the (short) keys $\pk_1,\dots,\pk_\feaNum$, the adversary $\A$ can efficiently sample from both $\prob$ and $\prob'$. Consequently, by running $\classifier_\model$ on many samples of $\prob$ and $\prob'$, $\A$ can compute an approximation $\tilde{\epsilon}$ of $\epsilon = \Adv_{\prob}(\classifier_\model)$ and an approximation $\tilde{\epsilon}'$ of $\epsilon' = \Adv_{\prob'}(\classifier_\model)$. Using the Hoeffding bound we can establish that if the approximations were computed using $O(1/\gamma^2)$ samples, then the approximation errors are smaller than $\gamma$, except with negligible probability. That is, we can make the approximation error $\gamma$ arbitrarily small at the cost of a runtime overhead of $O(1/\gamma^2)$. Now distinguish the following two cases:
\begin{enumerate}
    \item $|\tilde{\epsilon} - \tilde{\epsilon}'| < 3 \gamma$
    \item $|\tilde{\epsilon} - \tilde{\epsilon}'| \geq 3\gamma$
\end{enumerate}
In the first case it follows that $|\epsilon - \epsilon'| < 5 \gamma$. Given that $\gamma$ is sufficiently smaller than $\epsilon$, we can treat $\epsilon'$ and $\epsilon$ as essentially the same and determine that the classifier $\classifier_\model$ is \emph{insensitive} to the modification of $\ciph_1$.
In the second case however, $\A$ will determine that $\classifier_\model$ is sensitive to the modification of $\ciph_1$ and conclude that $\classifier_\model$ must know $\sk_1$. This can be established as follows: First, note that $|\tilde{\epsilon} - \tilde{\epsilon}'| \geq 3\gamma$ implies  $|\epsilon - \epsilon'| > \gamma$. Furthermore, noting that instances of $\prob$ and $\prob'$ only differ in the first feature, we can use $\classifier_\model$ to construct a distinguisher $\distinguisher_\model$ which distinguishes $\Enc(\pk_1,0)$ and $\Enc(\pk_1,1)$ with advantage $\gamma$. However, by the key-knowledge security of the encryption scheme $(\KeyGen,\Enc,\Dec)$, there exists an extractor $\extractor$ such that $\extractor(\st,\gamma) = \sk_1$, except with negligible probability over the choice of $\sk_1$ and $\st$.

Depending on whether $\A$ determined that $\classifier_\model$ is sensitive to the modification of $\ciph_1$ or not, $\A$ will proceed as follows. If it determined that $\classifier_\model$ is \emph{insensitive} to this modification, then it will set $\prob_1 = \prob'$. On the other hand, if it determined that $\classifier_\model$ is sensitive to the modification of $\ciph_1$, it will set $\prob_1 = \prob$. Note that in either case $|\Adv_{\prob}(\classifier_\model) - \Adv_{\prob_1}(\classifier_\model)| < 3 \gamma$, so essentially $\Adv_{\prob_1}(\classifier_\model) \approx \Adv_{\prob}(\classifier_\model)$. Now $\A$ will continue this procedure for the second feature, i.e. it will modify $\prob_1$ into $\prob_1'$ by computing $\ciph_2$ via $\ciph_2 \gets \Enc(\pk_2,1-b)$. By iterating this procedure $\A$ will be able to determine all indices $i \in [\feaNum]$ for which $\classifier_\model$ is sensitive in feature $i$. Moreover, call the sequence of \emph{hybrid problems} defined in this process $\prob_1,\dots,\prob_\feaNum$.

The key insight is now the following: We claim that if the model $\model$ is of size $\leq (k+1) \cdot \ell - \secp$, then $\classifier_\model$ cannot be sensitive in more than $k$ features. To see this, note that by the above argument if $\classifier_\model$ is sensitive in $k+1$ features, say $i_1,\dots,i_{k+1}$, then we can extract the keys $\sk_{i_1},\dots,\sk_{i_{k+1}}$ from $\classifier_\model$ as described above. But now recall that the $\sk_i$ are uniformly random in $\bin^\ell$. This means that from the model $\model$, which is of size $\leq (k+1) \cdot \ell - \secp$, we can recover a uniformly random string $(\sk_{i_1},\dots,\sk_{i_{k+1}})$ which is of size $(k+1) \cdot \ell$. But by Shannon's source coding theorem~\cite{Shannon48a} this is impossible. It follows that $\classifier_\model$ is sensitive to at most $k$ features, and $\A$ can detect the indices of these features as outlined above. Let $J \subseteq [\feaNum]$ be the set of indices of these features.

Now given that $\A$ has discovered the sensitive features of $\classifier_\model$, the actual attack works as follows. Given a sample $\point = (\ciph_1,\dots,\ciph_\feaNum)$ for class $1 - b$, for $i \in J$ $\A$ replaces the $\ciph_i$ by $\ciph'_i = \Enc(\pk_i,b)$ yielding an adversarial example $\tilde{\point}$. Now note the following: By the way we constructed the set of sensitive features $J$, the adversarial example $\tilde{\point}$ is an instance of the hybrid problem $\prob_\feaNum$ for class $b$. But this means that $\Pr[\classifier_\model(\tilde{\point}) = b] \geq \frac{1}{2} + \epsilon - \feaNum \cdot \gamma$, i.e., $\tilde{\point}$ is classified as class $b$ with high probability given that $\gamma$ is suffciently smaller than $\epsilon / \feaNum$. On the other hand, let $\targetMSet \subseteq [\feaNum]$ be a set of size $\leq \feaNum - k$ such that $\targetMSet \cap J = \emptyset$, i.e., none of the indices in $J$ are protected by the metric $\metric_\targetMSet$. Then it holds that $\metric_{\targetMSet}(\point,\tilde{\point}) < 1$, i.e. in the metric $\metric_\targetMSet$ the perturbation $\tilde{\point}$ is in within $\A$'s perturbation budget.

To wrap up, we have shown that $\A$ fools any classifier $\classifier_\model$ for a model $\model$ of size at most $k \cdot \ell$ in an adaptively chosen metric $\metric_\targetMSet$ in the class $\class$. This concludes this outline.

\section{Related Work}
Little formal work has been done concerning the importance of norm choices for robustness. 
Demontis et al.~\cite{demontis2017yes} investigate the relationship between the $L_0$ and $L_{\infty}$ norms for linear classifiers.
At the same time, Croce and Hein~\cite{croce2019provable} show that a regularizer for both $L_0$ and $L_{\infty}$ can be constructed. Such a regularizer is then robust in  all $L_p$-norms.
Our work instead formalizes the problem of choosing the right metric, and the resulting problem of remaining vulnerable in another metric.

There are formal works in adversarial ML that show impossibilities to achieve robustness for specific classifiers~\cite{DBLP:conf/icml/WangJC18}.
Other works are also based on the PAC framework, but derive complexity bounds in $L_p$ ($p>1$) norms~\cite{cullina2018pac}.
Alternatively, works reason that generalization enables vulnerability~\cite{grosse2018killing} or 
that a robust classifier needs more data to train than its vulnerable counterpart~\cite{schmidt2018adversarially}.
Further, Chen et al.~\cite{chen2019towards} propose a formal argument about feature discretization, also using Hamming metrics in their reasoning.
However they derive a robustness boost given sufficiently well separated data among other properties. In this work, we show an impossibility result based on cryptographic primitives for the problem of choosing the right metric for a classifier in the context of robustness.

To conclude, we review again the works that are related to cryptographic primitives.
Most works aim to derive impossibility results for robustness.
Bubeck et al.~\cite{DBLP:conf/icml/BubeckLPR19} use the statistical query model and two statistically similar distributions.
In this setting, learning is possible, whereas robust learning is not.
This contradiction is derived by, in a nutshell, adding the label to the sample~\cite{DBLP:conf/colt/DegwekarNV19}.
Bubeck et al.~\cite{bubeck2018adversarial} rely on a similar construction, however using a pseudo-random number generator. 
The construction in our paper is instead based on ideas from cryptography in the bounded storage model~\citep{dziembowski2002tight}.
Furthermore, we extend previous settings to a learnable task that is robust to random noise.
Vulnerability then arises as the learner does not know the metric the attacker will choose in advance.
We also represent the learner as an entity that compresses the structures presented in the data, additionally to the efficiency requirement.
Another line of work rooted in cryptography aims to leverage computational hardness to increase the difficulty to compute adversarial examples.
For example, Mahloujifar and Mahmoody~\cite{mahloujifar2018can} show using signatures that computational hardness can be used to harden the task solved by an $L_0$ attacker. 
Garg et al.~\cite{garg2019adversarially} extend this work to other metrics and a game based definition for robustness.

\section{Conclusion}
In this paper, we constructed  a classification problem which admits robust classification by a small classifier if the target metric is known at training time. However, robust classification is impossible for small classifiers if the target metric is chosen after training. In the process, we explored a novel connection between hardness of robust classification an bounded storage model cryptography.


\section*{Acknowledgments}
This work was supported by the German Federal Ministry of Education and
Research (BMBF) through funding for the Center for IT-Security,
Privacy and Accountability (CISPA) (FKZ: 16KIS0753). This work is partially funded by the Helmholtz Association within the project "Trustworthy Federated Data Analytics” (TFDA) (funding number ZT-I-OO1 4).

\bibliography{lit}

\begin{thebibliography}{53}
\providecommand{\natexlab}[1]{#1}
\providecommand{\url}[1]{\texttt{#1}}
\expandafter\ifx\csname urlstyle\endcsname\relax
  \providecommand{\doi}[1]{doi: #1}\else
  \providecommand{\doi}{doi: \begingroup \urlstyle{rm}\Url}\fi

\bibitem[Athalye et~al.(2018)Athalye, Carlini, and
  Wagner]{athalye2018obfuscated}
A.~Athalye, N.~Carlini, and D.~Wagner.
\newblock Obfuscated gradients give a false sense of security: Circumventing
  defenses to adversarial examples.
\newblock In \emph{ICML}, 2018.

\bibitem[Bellare et~al.(2016)Bellare, Kane, and Rogaway]{BellareKR16}
M.~Bellare, D.~Kane, and P.~Rogaway.
\newblock Big-key symmetric encryption: Resisting key exfiltration.
\newblock In \emph{{CRYPTO} {(1)}}, volume 9814 of \emph{Lecture Notes in
  Computer Science}, pages 373--402. Springer, 2016.

\bibitem[Biggio and Roli(2018)]{biggio2018wild}
B.~Biggio and F.~Roli.
\newblock Wild patterns: Ten years after the rise of adversarial machine
  learning.
\newblock \emph{Pattern Recognition}, 84:\penalty0 317--331, 2018.

\bibitem[Bubeck et~al.(2018)Bubeck, Lee, Price, and
  Razenshteyn]{bubeck2018adversarial}
S.~Bubeck, Y.~T. Lee, E.~Price, and I.~Razenshteyn.
\newblock Adversarial examples from cryptographic pseudo-random generators.
\newblock \emph{arXiv preprint arXiv:1811.06418}, 2018.

\bibitem[Bubeck et~al.(2019)Bubeck, Lee, Price, and
  Razenshteyn]{DBLP:conf/icml/BubeckLPR19}
S.~Bubeck, Y.~T. Lee, E.~Price, and I.~P. Razenshteyn.
\newblock Adversarial examples from computational constraints.
\newblock In \emph{ICML}, pages 831--840, 2019.

\bibitem[Cachin and Maurer(1997)]{CachinM97}
C.~Cachin and U.~M. Maurer.
\newblock Unconditional security against memory-bounded adversaries.
\newblock In \emph{{CRYPTO}}, volume 1294 of \emph{Lecture Notes in Computer
  Science}, pages 292--306. Springer, 1997.

\bibitem[Cachin et~al.(1998)Cachin, Cr{\'{e}}peau, and Marcil]{CachinCM98}
C.~Cachin, C.~Cr{\'{e}}peau, and J.~Marcil.
\newblock Oblivious transfer with a memory-bounded receiver.
\newblock In \emph{{FOCS}}, pages 493--502. {IEEE} Computer Society, 1998.

\bibitem[Carlini and Wagner(2017)]{carlini2017adversarial}
N.~Carlini and D.~Wagner.
\newblock Adversarial examples are not easily detected: Bypassing ten detection
  methods.
\newblock In \emph{Proceedings of the 10th ACM Workshop on Artificial
  Intelligence and Security}, pages 3--14. ACM, 2017.

\bibitem[Chen et~al.(2019)Chen, Wu, Rastogi, Liang, and Jha]{chen2019towards}
J.~Chen, X.~Wu, V.~Rastogi, Y.~Liang, and S.~Jha.
\newblock Towards understanding limitations of pixel discretization against
  adversarial attacks.
\newblock In \emph{EuroS\&P}, pages 480--495. IEEE, 2019.

\bibitem[Cohen et~al.(2019)Cohen, Rosenfeld, and Kolter]{cohen2019certified}
J.~M. Cohen, E.~Rosenfeld, and J.~Z. Kolter.
\newblock Certified adversarial robustness via randomized smoothing.
\newblock In \emph{Proceedings of the 36th International Conference on Machine
  Learning, {ICML} 2019}, 2019.

\bibitem[Croce and Hein(2019)]{croce2019provable}
F.~Croce and M.~Hein.
\newblock Provable robustness against all adversarial {$ l_p $}-perturbations
  for {$p \geq 1$}.
\newblock \emph{arXiv preprint arXiv:1905.11213}, 2019.

\bibitem[Crowley et~al.(2018)Crowley, Turner, Storkey, and
  O'Boyle]{crowley2018pruning}
E.~J. Crowley, J.~Turner, A.~J. Storkey, and M.~F.~P. O'Boyle.
\newblock Pruning neural networks: is it time to nip it in the bud?
\newblock \emph{NIPS 2018 Workshop CDNNRIA}, 2018.

\bibitem[Cullina et~al.(2018)Cullina, Bhagoji, and Mittal]{cullina2018pac}
D.~Cullina, A.~N. Bhagoji, and P.~Mittal.
\newblock Pac-learning in the presence of adversaries.
\newblock In \emph{Advances in Neural Information Processing Systems}, pages
  230--241, 2018.

\bibitem[Damg{\aa}rd et~al.(2007)Damg{\aa}rd, Fehr, Renner, Salvail, and
  Schaffner]{damgaard2007tight}
I.~B. Damg{\aa}rd, S.~Fehr, R.~Renner, L.~Salvail, and C.~Schaffner.
\newblock A tight high-order entropic quantum uncertainty relation with
  applications.
\newblock In \emph{Annual International Cryptology Conference}, pages 360--378.
  Springer, 2007.

\bibitem[Degwekar et~al.(2019)Degwekar, Nakkiran, and
  Vaikuntanathan]{DBLP:conf/colt/DegwekarNV19}
A.~Degwekar, P.~Nakkiran, and V.~Vaikuntanathan.
\newblock Computational limitations in robust classification and win-win
  results.
\newblock In \emph{COLT}, pages 994--1028, 2019.

\bibitem[Demontis et~al.(2017)Demontis, Melis, Biggio, Maiorca, Arp, Rieck,
  Corona, Giacinto, and Roli]{demontis2017yes}
A.~Demontis, M.~Melis, B.~Biggio, D.~Maiorca, D.~Arp, K.~Rieck, I.~Corona,
  G.~Giacinto, and F.~Roli.
\newblock Yes, machine learning can be more secure! a case study on android
  malware detection.
\newblock \emph{IEEE Transactions on Dependable and Secure Computing}, 2017.

\bibitem[D{\"{o}}ttling et~al.(2019)D{\"{o}}ttling, Garg, Goyal, and
  Malavolta]{DottlingGGM19}
N.~D{\"{o}}ttling, S.~Garg, V.~Goyal, and G.~Malavolta.
\newblock Laconic conditional disclosure of secrets and applications.
\newblock In \emph{{FOCS}}, pages 661--685. {IEEE} Computer Society, 2019.

\bibitem[D{\"{o}}ttling et~al.(2020)D{\"{o}}ttling, Garg, Hajiabadi, Masny, and
  Wichs]{DottlingGHMW20}
N.~D{\"{o}}ttling, S.~Garg, M.~Hajiabadi, D.~Masny, and D.~Wichs.
\newblock Two-round oblivious transfer from {CDH} or {LPN}.
\newblock In \emph{{EUROCRYPT} {(2)}}, volume 12106 of \emph{Lecture Notes in
  Computer Science}, pages 768--797. Springer, 2020.

\bibitem[Dwork et~al.(2003)Dwork, Naor, Reingold, and Stockmeyer]{DworkNRS03}
C.~Dwork, M.~Naor, O.~Reingold, and L.~J. Stockmeyer.
\newblock Magic functions.
\newblock \emph{J. {ACM}}, 50\penalty0 (6):\penalty0 852--921, 2003.

\bibitem[Dziembowski and Maurer(2002)]{dziembowski2002tight}
S.~Dziembowski and U.~Maurer.
\newblock Tight security proofs for the bounded-storage model.
\newblock In \emph{Proceedings of the thiry-fourth annual ACM symposium on
  Theory of computing}, pages 341--350, 2002.

\bibitem[Frankle and Carbin(2019)]{frankle2018lottery}
J.~Frankle and M.~Carbin.
\newblock The lottery ticket hypothesis: Finding sparse, trainable neural
  networks.
\newblock \emph{ICLR}, 2019.

\bibitem[Gamal(1984)]{Gamal84}
T.~E. Gamal.
\newblock A public key cryptosystem and a signature scheme based on discrete
  logarithms.
\newblock In \emph{{CRYPTO}}, volume 196 of \emph{Lecture Notes in Computer
  Science}, pages 10--18. Springer, 1984.

\bibitem[Garg et~al.(2013)Garg, Gentry, Sahai, and Waters]{GGSW13}
S.~Garg, C.~Gentry, A.~Sahai, and B.~Waters.
\newblock Witness encryption and its applications.
\newblock In \emph{{STOC}}, pages 467--476. {ACM}, 2013.

\bibitem[Garg et~al.(2019)Garg, Jha, Mahloujifar, and
  Mahmoody]{garg2019adversarially}
S.~Garg, S.~Jha, S.~Mahloujifar, and M.~Mahmoody.
\newblock Adversarially robust learning could leverage computational hardness.
\newblock \emph{arXiv preprint arXiv:1905.11564}, 2019.

\bibitem[Glass and Switkes(1976)]{glass1976pattern}
L.~Glass and E.~Switkes.
\newblock Pattern recognition in humans: Correlations which cannot be
  perceived.
\newblock \emph{Perception}, 5\penalty0 (1):\penalty0 67--72, 1976.

\bibitem[Goldreich et~al.(1984)Goldreich, Goldwasser, and
  Micali]{GoldreichGM84}
O.~Goldreich, S.~Goldwasser, and S.~Micali.
\newblock On the cryptographic applications of random functions.
\newblock In \emph{{CRYPTO}}, volume 196 of \emph{Lecture Notes in Computer
  Science}, pages 276--288. Springer, 1984.

\bibitem[Goldwasser et~al.(2013)Goldwasser, Kalai, Popa, Vaikuntanathan, and
  Zeldovich]{GKPVZ13}
S.~Goldwasser, Y.~T. Kalai, R.~A. Popa, V.~Vaikuntanathan, and N.~Zeldovich.
\newblock How to run turing machines on encrypted data.
\newblock In \emph{{CRYPTO} {(2)}}, volume 8043 of \emph{Lecture Notes in
  Computer Science}, pages 536--553. Springer, 2013.

\bibitem[Grosse et~al.(2018)Grosse, Smith, and Backes]{grosse2018killing}
K.~Grosse, M.~T. Smith, and M.~Backes.
\newblock Killing four birds with one gaussian process: Analyzing test-time
  attack vectors on classification.
\newblock \emph{arXiv preprint arXiv:1806.02032}, 2018.

\bibitem[Han et~al.(2015)Han, Pool, Tran, and Dally]{han2015learning}
S.~Han, J.~Pool, J.~Tran, and W.~Dally.
\newblock Learning both weights and connections for efficient neural network.
\newblock In \emph{NIPS}, pages 1135--1143, 2015.

\bibitem[He et~al.(2017)He, Wei, Chen, Carlini, and Song]{he2017adversarial}
W.~He, J.~Wei, X.~Chen, N.~Carlini, and D.~Song.
\newblock Adversarial example defense: Ensembles of weak defenses are not
  strong.
\newblock In \emph{11th $\{$USENIX$\}$ Workshop on Offensive Technologies
  ($\{$WOOT$\}$ 17)}, 2017.

\bibitem[Jain et~al.(2017)Jain, Kalai, Khurana, and Rothblum]{KKR17}
A.~Jain, Y.~T. Kalai, D.~Khurana, and R.~Rothblum.
\newblock Distinguisher-dependent simulation in two rounds and its
  applications.
\newblock In \emph{{CRYPTO} {(2)}}, volume 10402 of \emph{Lecture Notes in
  Computer Science}, pages 158--189. Springer, 2017.

\bibitem[Jia et~al.(2020)Jia, Wang, Cao, and Gong]{jia2020certified}
J.~Jia, B.~Wang, X.~Cao, and N.~Z. Gong.
\newblock Certified robustness of community detection against adversarial
  structural perturbation via randomized smoothing.
\newblock In \emph{Proceedings of The Web Conference 2020}, pages 2718--2724,
  2020.

\bibitem[Lee et~al.(2019)Lee, Yuan, Chang, and Jaakkola]{lee2019tight}
G.-H. Lee, Y.~Yuan, S.~Chang, and T.~Jaakkola.
\newblock Tight certificates of adversarial robustness for randomly smoothed
  classifiers.
\newblock In \emph{Advances in Neural Information Processing Systems}, pages
  4911--4922, 2019.

\bibitem[Levine and Feizi(2020)]{levine2020wasserstein}
A.~Levine and S.~Feizi.
\newblock Wasserstein smoothing: Certified robustness against wasserstein
  adversarial attacks.
\newblock In \emph{International Conference on Artificial Intelligence and
  Statistics}, pages 3938--3947, 2020.

\bibitem[Luo et~al.(2017)Luo, Wu, and Lin]{luo2017thinet}
J.-H. Luo, J.~Wu, and W.~Lin.
\newblock Thinet: A filter level pruning method for deep neural network
  compression.
\newblock In \emph{ICCV}, pages 5058--5066, 2017.

\bibitem[Mahloujifar and Mahmoody(2019)]{mahloujifar2018can}
S.~Mahloujifar and M.~Mahmoody.
\newblock Can adversarially robust learning leveragecomputational hardness?
\newblock In \emph{Algorithmic Learning Theory}, pages 581--609, 2019.

\bibitem[Maurer(1992)]{Maurer92}
U.~M. Maurer.
\newblock Conditionally-perfect secrecy and a provably-secure randomized
  cipher.
\newblock \emph{J. Cryptology}, 5\penalty0 (1):\penalty0 53--66, 1992.

\bibitem[{Papernot} et~al.(2018){Papernot}, {McDaniel}, {Sinha}, and
  {Wellman}]{8406613}
N.~{Papernot}, P.~{McDaniel}, A.~{Sinha}, and M.~P. {Wellman}.
\newblock Sok: Security and privacy in machine learning.
\newblock In \emph{2018 IEEE European Symposium on Security and Privacy (EuroS
  P)}, pages 399--414, April 2018.

\bibitem[Regev(2005)]{Regev05}
O.~Regev.
\newblock On lattices, learning with errors, random linear codes, and
  cryptography.
\newblock In \emph{{STOC}}, pages 84--93. {ACM}, 2005.

\bibitem[Salman et~al.(2019)Salman, Li, Razenshteyn, Zhang, Zhang, Bubeck, and
  Yang]{salman2019provably}
H.~Salman, J.~Li, I.~Razenshteyn, P.~Zhang, H.~Zhang, S.~Bubeck, and G.~Yang.
\newblock Provably robust deep learning via adversarially trained smoothed
  classifiers.
\newblock In \emph{NeurIPS}, pages 11289--11300, 2019.

\bibitem[Schmidt et~al.(2018)Schmidt, Santurkar, Tsipras, Talwar, and
  Madry]{schmidt2018adversarially}
L.~Schmidt, S.~Santurkar, D.~Tsipras, K.~Talwar, and A.~Madry.
\newblock Adversarially robust generalization requires more data.
\newblock In \emph{NIPS}, pages 5014--5026, 2018.

\bibitem[Shalev-Shwartz and Ben-David(2014)]{shalev2014understanding}
S.~Shalev-Shwartz and S.~Ben-David.
\newblock \emph{Understanding machine learning: From theory to algorithms}.
\newblock Cambridge university press, 2014.

\bibitem[Shamir(1979)]{shamir1979share}
A.~Shamir.
\newblock How to share a secret.
\newblock \emph{Communications of the ACM}, 22\penalty0 (11):\penalty0
  612--613, 1979.

\bibitem[Shannon(1948)]{Shannon48a}
C.~E. Shannon.
\newblock A mathematical theory of communication.
\newblock \emph{Bell Syst. Tech. J.}, 27\penalty0 (4):\penalty0 623--656, 1948.

\bibitem[Sharma and Chen(2018)]{sharma2017attacking}
Y.~Sharma and P.~Chen.
\newblock Attacking the madry defense model with $l_1$-based adversarial
  examples.
\newblock In \emph{ICLR, 2018, Workshop Track Proceedings}, 2018.

\bibitem[Tramer et~al.(2020)Tramer, Carlini, Brendel, and
  Madry]{tramer2020adaptive}
F.~Tramer, N.~Carlini, W.~Brendel, and A.~Madry.
\newblock On adaptive attacks to adversarial example defenses.
\newblock \emph{arXiv preprint arXiv:2002.08347}, 2020.

\bibitem[Valiant(1984)]{Valiant84}
L.~G. Valiant.
\newblock A theory of the learnable.
\newblock In \emph{{STOC}}, pages 436--445. {ACM}, 1984.

\bibitem[{Wang} et~al.(2019){Wang}, {Yi}, {Zou}, and {Wu}]{8845708}
K.~{Wang}, P.~{Yi}, F.~{Zou}, and Y.~{Wu}.
\newblock Generating adversarial samples with constrained wasserstein distance.
\newblock \emph{IEEE Access}, 7:\penalty0 136812--136821, 2019.

\bibitem[Wang et~al.(2018)Wang, Jha, and Chaudhuri]{DBLP:conf/icml/WangJC18}
Y.~Wang, S.~Jha, and K.~Chaudhuri.
\newblock Analyzing the robustness of nearest neighbors to adversarial
  examples.
\newblock In \emph{ICML}, pages 5120--5129, 2018.

\bibitem[Wang et~al.(2004)Wang, Bovik, Sheikh, and
  Simoncelli]{DBLP:journals/tip/WangBSS04}
Z.~Wang, A.~C. Bovik, H.~R. Sheikh, and E.~P. Simoncelli.
\newblock Image quality assessment: from error visibility to structural
  similarity.
\newblock \emph{{IEEE} Trans. Image Processing}, 13\penalty0 (4):\penalty0
  600--612, 2004.

\bibitem[Wong et~al.(2019)Wong, Schmidt, and Kolter]{wong2019wasserstein}
E.~Wong, F.~R. Schmidt, and J.~Z. Kolter.
\newblock Wasserstein adversarial examples via projected sinkhorn iterations.
\newblock \emph{arXiv preprint arXiv:1902.07906}, 2019.

\bibitem[Xiao et~al.(2018)Xiao, Zhu, Li, He, Liu, and Song]{Xiao:2018up}
C.~Xiao, J.-Y. Zhu, B.~Li, W.~He, M.~Liu, and D.~Song.
\newblock {Spatially Transformed Adversarial Examples}.
\newblock \emph{arXiv.org}, Jan. 2018.

\bibitem[Zhang et~al.(2018)Zhang, Isola, Efros, Shechtman, and
  Wang]{zhang2018unreasonable}
R.~Zhang, P.~Isola, A.~A. Efros, E.~Shechtman, and O.~Wang.
\newblock The unreasonable effectiveness of deep features as a perceptual
  metric.
\newblock In \emph{CVPR}, pages 586--595, 2018.

\end{thebibliography}

\clearpage
\appendix
\section{Overview}
This Appendix contains the formal proofs and constructions with their corresponding background. We split this content into three parts, of which we here give a brief overview.

\textbf{B - Encryption with Key-Knowledge Security.} Our construction is based on a public key encryption scheme. 
To encompass the size constraint in relation to robustness, we require  
secret keys to be arbitrarily large, whereas the ciphertexts and public keys are compact.
This leaves us with the problem that a partial key might suffice to reconstruct the original message or sample.
We thus require additionally a strengthened security guarantee that any adversary that is able to distinguish encrypted messages with non-negligible advantage must know the whole corresponding secret key. 
The first section formalizes and verifies these properties to lay the foundation for our constructions.

\textbf{C - Definition of PAC and robust PAC learning.} In addition to the previously formalized encryption scheme, we also define learning and in particular robust learning. To this end, we use the Probably Approximately Correct (PAC) learning framework~\cite{Valiant84}. We distinguish two notions of robustness, strongly robust PAC learning and weakly robust PAC learning. Roughly speaking, a strongly robust classifier cannot be fooled at all. A weakly robust classifier represents a more realistic scenario, where the classifier contains some form of outlier detection. In the case of weakly robust PAC learning, the attacker is able to succeed if she reliably triggers this outlier class (comparable to DDoS attack on a server), or alternatively is able to craft a confidently classified example.

\textbf{D - Our Construction.} This last section combines the building blocks into our construction. We first define a learning task and the small classifier that is robust in one metric. 
Then, we describe the large classifier that is robust in any metric and conclude with the impossibility result.

Before we start with the encryption scheme, we recap the definition of weighted hamming metrics as given in the main paper. 
The metric is defined over a finite alphabet $\Sigma$. 
We write vector $\point \in \Sigma^{\feaNum}$, with $\feaNum$ components or features, each referred as $\point_i$.
For two $\point,\OtherPoint \in \Sigma^{\feaNum}$, we define
the \emph{Hamming metric} $\metric$ as $\metric(\point,\OtherPoint) = \sum_{i = 1}^{\feaNum} 1_{\point_i \neq \OtherPoint_i}$. 
The metric is based on the indicator function $1_{\point_i \neq \OtherPoint_i}$ which assumes the value $1$ if $\point_i \neq \OtherPoint_i$ and $0$ otherwise.
As stated in the main paper, we augment the notion of Hamming metrics to \emph{weighted} Hamming metrics. Let $\weight \in \mathbb{R}_{> 0}^{\feaNum}$ be a positive real vector. We define the weighted Hamming metric $\metric_\weight$ as
\[
\metric_\weight(\point,\OtherPoint) = \sum_{i = 1}^{\feaNum} \weightElem_i \cdot 1_{x_i \neq z_i}
\]
for all $\point, \OtherPoint \in \Sigma^{\feaNum}$. 
Note that $\metric_\weight$ is in fact a metric, i.e. if $\metric_{\weight}(\point,\OtherPoint) = 0$ then $\point = \OtherPoint$ and $\metric_{\weight}(\point,\mathbf{\YetAnotherPoint}) \leq \metric_{\weight}(\point,\OtherPoint) + \metric_{\weight}(\OtherPoint,\YetAnotherPoint)$ for all $\point,\OtherPoint,\YetAnotherPoint \in \Sigma^{\feaNum}$. 

The weights of the metric are directly linked to the attacker:
perturbing a feature with a high weight will be more costly for the adversary than perturbing features with small weights. 
We further normalize the adversary's attack budget to $1$.
In other words, a perturbation $\tilde{\point}$ is allowed if 
 $\metric_{\weight}(\point,\tilde{\point}) < 1$. 

We further simplify the weights and
assign weight $\weight_i = 1$ to a feature to protect it:
if $\weight_i = 1$, $x_i$ cannot be modified by the adversary.
Given integers $\feaNum$ and $\paramk$, for a subset 
 $\targetMSet \subseteq [\feaNum]$ of size $\paramk$, define the weight $\weight_{\targetMSet}$ by 
\[
\weightElem_i = \begin{cases} 1 & \text{ if } i \in \targetMSet \\ 1/\feaNum & \text{ otherwise} \end{cases}.
\]

For simplicity, we write $\metric_\targetMSet = \metric_{\weight_{\targetMSet}}$. 
For two points $\point$ and $\tilde{\point}$, $\metric_\targetMSet(\point,\tilde{\point}) \geq 1$
if both points differ in a feature  $i \in \targetMSet$, hence $\point_i \neq \tilde{\point}_i$.
In other words, the adversaries budget is exceeded if a feature with index $i \in \targetMSet$ is perturbed: the feature is protected.
 The class of metrics we consider in our constructions is
\[
\class = \{ \metric_\targetMSet | \targetMSet \subseteq [\feaNum], |\targetMSet| = t \},
\]
i.e. every set $\targetMSet \subseteq [\feaNum]$ of size $t$ will give rise to a metric.

\section{Big-Key Encryption with Key-Knowledge Security}

\def \st{\mathsf{st}}

In this Section we will discuss a type of public key encryption scheme we call \emph{encryption with key knowledge}. This is a standard encryption scheme which comes with the following strengthened security guarantee: Any adversary which distinguishes encryptions of (say) 0 and 1 with non-negligible advantage \emph{must know the corresponding secret key}. We will construct such encryption schemes with arbitrarily large secret keys but compact ciphertexts. Conforming with previous works, we will call this type of encryption \emph{big-key encryption}. For the sake of simplicity, we we only define big-key encryption for binary messages $\mes \in \bin$.

\begin{definition}
A big-key encryption scheme consists of 3 algorithms $(\KeyGen,\Enc,\Dec)$ with the following syntax.
\begin{itemize}
    \item $\KeyGen(1^\secp,\ell)$: Takes as input a security parameter $1^\secp$ and a size-parameter $\ell$ and outputs a public key $\pk$ and a secret key $\sk$.
    \item $\Enc(\pk,\mes)$: Takes as input a public key $\pk$ and a message $\mes \in \bin$ and outputs a ciphertext $\ciph$
    \item $\Dec(\sk,\ciph)$: Takes as input a secret key $\sk$ and a ciphertext $\ciph$ and outputs a message $\mes' \in \bin$.
\end{itemize}
Assume for simplicity that the message space is $\bin$. We require the following properties.
\begin{itemize}
    \item \textbf{Correctness}: It holds for every message $\mes \in \bin$ that $\Dec(\sk,\Enc(\pk,\mes)) = \mes$, where $(\pk,\sk) \gets \KeyGen(1^\secp,\ell)$.
    \item \textbf{Compactness}: Secret keys $\sk$ are of size at most $\ell \cdot \secp$. Public keys $\pk$ and ciphertexts $\ciph$ are of size $\poly$ and in particular independent of $\ell$. Furthermore, we require that the runtime of $\Enc$ is also $\poly$ independent of $\ell$.
\end{itemize}
\end{definition}

\def \Adv{\mathsf{Adv}}
\def \Ex{\mathcal{E}}

We define security of big-key encryption via a notion we call key-knowledge security. In a nutshell, this notion requires that an adversary who can distinguish encryptions of 0 and 1 under a public key $\pk$ must know the corresponding secret key $\sk$. This is formalized via a \emph{knowledge extractor} $\Ex$.

\begin{definition}[Key-Knowledge Security]
Let $(\KeyGen,\Enc,\Dec)$ be a big-key encryption scheme. Consider the following 2-phase security experiment with a two-stage adversary $\A = (\A_1,\A_2)$.
\begin{description}
\item Stage $S_1(1^\secp,\A_1)$:
\begin{itemize}
\item Compute $(\pk,\sk) \gets \KeyGen(1^\secp,\ell)$
\item $\st \gets \A_1(1^\secp,\pk,\sk)$
\item Output $(\st,\pk)$
\end{itemize} 
\item Stage $S_2(\A_2,b,\st,\pk)$:
\begin{itemize}
\item Compute $\ciph^\ast \gets \Enc(\pk,b)$
\item Compute $b' \gets \A_2(\st,\pk,\ciph^\ast)$
\item If $b' = b$ output 1, otherwise 0.
\end{itemize}
\end{description}

Assume without loss of generality that $\st$ contains $\pk$. Fix an intermediate output $(\st,\pk) \gets S_1(1^\secp,\A_1)$ and let $\Exp_\A(\st)$ be the output of the experiment with intermediate state $\st$, i.e. $\Exp^b_\A(\st) = S_2(\A_2,b,\st)$. The advantage of $\A_2$ is defined by
\[
\Adv_\st(\A_2) = \Pr[\Exp^0_\A(\st) = 1] - \Pr[\Exp^1_\A(\st) = 1].
\]

We say that a big-key encryption scheme $(\KeyGen,\Enc,\Dec)$ is \emph{key-knowledge secure}, if there exists a PPT extractor $\Ex$, such that for every PPT adversary $\A = (\A_1,\A_2)$ the following holds. For every inverse-polynomial $\epsilon$ it holds that
\[
\Pr[\Adv_\st(\A) > \epsilon \text{ and } \Ex(\A,\epsilon) \neq \sk] < \epsilon,
\]
except for finitely many $\secp$. Here the probability is taken over the random choice of $\st$. Here, the runtime of $\Ex(\A,\epsilon)$ is $\poly[\secp,1/\epsilon]$.
\end{definition}

\subsection{Malicious Laconic Conditional Disclosure of Secrets}

In order to construct key-knowledge secure big-key encryption, we will make use of a recently introduced primitive called malicious laconic conditional disclosure of secrets, or lCDS for for short~\cite{DottlingGGM19}.

An lCDS scheme lets can be seen as a two round witness-encryption scheme~\cite{GGSW13}, in which the first message of the receiver commits to the witness. The feature of interest of lCDS is that the size of both the commitment and ciphertexts is independent of the size of the witness. This almost immediately implies a big-key encryption scheme. All we need additionally is an NP-language which has small statements but large and incompressible witnesses. We can construct such a language using collision resistant hash functions.

\def \lCDS{\mathsf{lCDS}}
\def \Rec{\mathsf{Rec}}
\def \crs{\mathsf{crs}}
\def \statement{\mathsf{x}}
\def \witness{\mathsf{w}}
\def \lang{\mathcal{L}}
\def \witrel{\mathcal{R}}
\def \com{\mathsf{com}}

\begin{definition}
Let $\lang$ be an NP-language and let $\witrel_\lang$ be its witness-relation. An laconic CDS scheme $\lCDS$ consists of four algorithms $(\Setup,\Rec,\Enc,\Dec)$ with the following syntax.
\begin{description}
    \item[$\Setup(1^\secp)$]: Takes as input the security paramter $1^\secp$ and outputs a common reference string $\crs$.
    
    \item[$\Rec(\crs,\statement,\witness)$]: Takes as input a common reference string $\crs$, a statement $\statement$ and a witness $\witness$ and outputs a a commitment $\com$ and a state $\st$.
    
    \item[$\Enc(\crs,\statement,\com,\mes)$]: Takes as input a common reference string $\crs$, a statement $\statement$, a commitment $\com$ and a message $\mes$ and outputs a ciphertext $\ciph$.
    
    \item[$\Dec(\crs,\ciph,\st)$]: Takes as input a common reference string $\crs$, a ciphertext $\ciph$ and a state $\st$ and outputs a message $\mes'$.
\end{description}

We require the following properties of laconic CDS scheme.
\begin{itemize}
    \item \textbf{Correctness}: It holds that $\Pr[\Dec(\crs,\Enc(\crs,\statement,\com,\mes),\st) = \mes] = 1$ given that $\crs \gets \Setup(1^\secp)$ and $(\com,\st) \gets \Rec(\crs,\statement,\witness)$.
    
\item \textbf{Compactness}: It holds that $|\com|$ and $|\ciph|$ are of size $\poly$ and in particular \emph{independent} of $|w|$.
\end{itemize}
\end{definition}

Note that in general the state $\st$ could be substantially larger than the witness $\witness$. However, without loss of generality we can assume that the state is of size $|\statement| + |\witness| + \secp$, as the $\st$ can be recomputed from $\crs, \statement, \witness$ and the random coins of $\Rec$, which we can assume to be a PRG-seed of size $\secp$.

In \cite{DottlingGGM19}, the security of laconic CDS and laconic functionalities in general is defined via a notion called context security. We briefly recall this definition and then show how it can be simplified for our purposes.

\def \mc{\mathcal}
\def \mcZ{\mathcal{Z}}
\def \ms{\mathsf}
\def \coins{\ms{r}}
\def \crs{\ms{crs}}
\def \rec{\ms{rec}}
\def \snd{\ms{snd}}

\def \ExtZ{\ms{Ext}\mc{Z}}
\def \mcEZ{\mc{EZ}}

\def \Sim{\ms{Sim}}
\def \aux{\ms{aux}}

\begin{definition}[Protocol Context]
We say that a PPT machine $\mcZ = (\mcZ_1,\mcZ_2)$ is a context for two message protocols $\Pi = (\Setup,\ms{R}_1,\ms{S},\ms{R}_2)$, if it has the following syntactic properties: The first stage $\mcZ_1$ takes as input a common reference string $\crs$ (generated by $\Setup$) and random coins $\coins_1$ and outputs a receiver message $\rec$ and a state $\st$. The second phase $\mcZ_2$ takes as input the state $\st$ and random coins $\coins_2$. The second phase is allowed to make queries $y$ to a sender oracle $\mc{O}_{\crs,\rec}(y)$, which are answered by $\ms{S}(\crs,\rec,y)$ (using fresh randomness from $\coins_2$). In the end the context outputs a bit $b^\ast$. Define $\mcZ(\secparam)$ by
\begin{itemize}
    \item Choose random tapes $\coins_1, \coins_2$
    \item Compute $\crs \gets \Setup(\secparam)$
    \item $(\st,\rec) \gets \mcZ_1(\crs,\coins_1)$
    \item $b^\ast \gets \mcZ_2^{\mc{O}_{\crs,\rec}(\cdot)}(\st,\coins_2)$
    \item Output $b^\ast$
\end{itemize}
\end{definition}

We will now provide our definition of context security.
\begin{definition}\label{def:key_knowledge}
Let $\Pi = (\Setup,\ms{R}_1,\ms{S},\ms{R}_2)$ be a two-message protocol realizing a two-party functionality $\mc{F}$. We say that $\Pi$ is \emph{context-secure} if the following holds for every $\Pi$-context $\mc{Z} = (\mc{Z}_1,\mc{Z}_2)$. We require that there exists a context extractor $\ExtZ$ and a simulators $\Sim$ such that the following holds for every $\delta > 0$:
\begin{enumerate}[(1)]

    \item $\ExtZ$ takes as input $\crs,\st,\rec$, random coins $\coins^\ast$ and a parameter $\delta$ and outputs a value $x^\ast$ and an auxiliary string $\aux$. $\ExtZ$ has overhead $\poly \cdot T_2$, where $T_2$ is the overhead of $\mcZ_2$ and the polynomial is independent of $\mcZ$, only depending on $\secpar$. 
    \item $\Sim$ takes as input $\rec,\aux$ and a value $z$ and outputs a sender-message $\snd$. We require The overhead of $\Sim$ to be polynomial in the overhead of $\secpar$, but independent of $\mcZ_1$ and $\mcZ_2$. 
    \item The experiment $\mcEZ(\secparam,\delta)$ is defined by
    \begin{itemize}
        \item Choose random tapes $\coins_1,\coins_2$
        \item Compute $\crs \gets \Setup(\secparam)$
        \item $(\st,\rec) \gets \mcZ_1(\crs,\coins_1)$
        \item $(x^\ast,\aux) \gets \ExtZ(\crs,\st,\rec, \coins^\ast,\delta)$
        \item $b^\ast \gets \mcZ_2^{\mc{O}'(\cdot)}(\st,\coins_2)$, where $\mc{O}'(y)$ computes and outputs $\Sim(\rec,\aux,\mc{F}(x^\ast,y))$
        \item Output $b^\ast$
    \end{itemize}
    \item (Security) It holds for every inverse polynomial $\epsilon = \epsilon(\secpar)$ that
    \[
    |\Pr[\mcZ(\secparam) = 1] - \Pr[\mcEZ(\secparam,\epsilon) = 1]| < \epsilon,
    \]
    except for finitely many $\secpar$.
\end{enumerate}
\end{definition}

For the case of laconic CDS, the functionality $\mc{F}$ is just the \emph{conditional disclosure} functionality $\mc{F}_{cds}$: $\mc{F}_{cds}(\witness,(\statement,\mes))$ takes as input a witness $\witness$ by the receiver and a pair $(\statement,\mes)$ of statement $\statement$ and message $\mes$ by the sender. If $(\statement,\witness) \in \witrel_\lang$ it outputs $\mes$ to the receiver, otherwise $\bot$.

We will now show that context security implies the following simplified security notion for laconic CDS, which will use in our construction. In fact, this security notion is analogous to the notion of \emph{extractable witness encryption}~\cite{GKPVZ13}, which requires that any adversary which can distinguish ciphertexts must know a witness.

\begin{definition}[Witness-Knowledge Security]
Let $\lCDS = (\Setup,\Rec,\Enc,\Dec)$ be a laconic CDS scheme. Consider the following 2-phase security experiment with a two-stage adversary $\A = (\A_1,\A_2)$.
\begin{description}
\item Stage $S_1(1^\secp,\A_1)$:
\begin{itemize}
\item Compute $\crs \gets \Setup(1^\secp)$
\item $(\statement,\com,\st) \gets \A_1(1^\secp,\crs)$
\item Output $(\crs,\statement,\com,\st)$
\end{itemize} 
\item Stage $S_2(\A_2,b,\statement,\com,\st)$:
\begin{itemize}
\item Compute $\ciph^\ast \gets \Enc(\crs,\statement,\com,b)$
\item Compute $b' \gets \A_2(\st,\ciph^\ast)$
\item If $b' = b$ output 1, otherwise 0.
\end{itemize}
\end{description}

\def \Adv{\mathsf{Adv}}
\def \Ex{\mathcal{E}}

Fix the output $(\crs,\statement,\com,\st) \gets S_1(1^\secp,\A_1)$ of the first stage and assume without loss of generality that $\st$ contains $\crs,\statement,\com$. Let $\Exp_\A(\st)$ be the output of the experiment with intermediate state $\st$, i.e. $\Exp^b_\A(\st) = S_2(\A_2,b,\crs,\statement,\com,\st)$. For a given state $\st$, we define the advantage of $\A_2$ by
\[
\Adv_\st(\A_2) = \Pr[\Exp^0_\A(\st) = 1] - \Pr[\Exp^1_\A(\st) = 1].
\]

We say that an encryption scheme $(\Setup,\Enc,\Dec)$ is \emph{witness-knowledge secure} or \emph{extractable}, if there exists a PPT extractor $\Ex$, such that for every PPT adversary $\A = (\A_1,\A_2)$ the following holds. For every inverse-polynomial $\epsilon$ it holds that
\[
\Pr[\Adv_\st(\A_2) > \epsilon \text{ and } \Ex(\A_2,\st,\epsilon) \neq \sk] < \epsilon,
\]
except for finitely many $\secp$. Here the probability is taken over the random choice of $\st$. Here, we allow the runtime of $\Ex$ to be $\poly[\secp,1/\epsilon]$.

\end{definition}

We will now show that context security implies witness-knowledge security for lCDS.

\begin{theorem}
Assume that a laconic CDS scheme $\lCDS$ is context-secure. Then it is also witness-knowledge secure.
\end{theorem}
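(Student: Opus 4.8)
The plan is to derive witness-knowledge security from context security by instantiating the context-security definition with a specific protocol context built from the witness-knowledge adversary $\A = (\A_1, \A_2)$. The key observation is that context security is a strong "extract-or-simulate" guarantee: for every $\Pi$-context $\mcZ$ there is an extractor $\ExtZ$ recovering a statement-dependent value $x^\ast$ and a simulator $\Sim$ such that the real sender oracle can be replaced by a simulated one (which only sees $\mc{F}_{cds}(x^\ast, y)$) without the context noticing. We want to turn this into: any $\A_2$ distinguishing encryptions of $0$ and $1$ forces $x^\ast = \witness$ to be a valid witness, from which we then extract the secret key (here, in the lCDS-to-big-key reduction, the witness \emph{is} the secret key).

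\textbf{Step 1: Build the context.} Given $\A = (\A_1,\A_2)$, define $\mcZ_1$ to run $\A_1(1^\secp,\crs)$, obtaining $(\statement,\com,\st)$; set the receiver message $\rec = (\statement, \com)$ and pass $\st$ (together with $\statement,\com$) as the context state. Define $\mcZ_2$ to sample a random bit $b \pick \bin$, query the sender oracle on $y = (\statement, b)$ to obtain a ciphertext $\ciph^\ast$, run $b' \gets \A_2(\st, \ciph^\ast)$, and output $1$ iff $b' = b$. Note $\Pr[\mcZ(\secparam) = 1] = \frac12 + \frac12 \E_\st[\Adv_\st(\A_2)]$ (splitting the coin $b$ and using the definition of $\Adv_\st$), so the context's winning probability is directly tied to the distinguishing advantage.

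\textbf{Step 2: Invoke context security and analyze the simulated world.} By context security, there exist $\ExtZ, \Sim$ such that $\mcEZ(\secparam,\epsilon)$ differs from $\mcZ(\secparam)$ by less than $\epsilon$, where in $\mcEZ$ the oracle $\mc{O}'(y)$ returns $\Sim(\rec, \aux, \mc{F}_{cds}(x^\ast, y))$. The crucial case analysis: if $(\statement, x^\ast) \notin \witrel_\lang$, then $\mc{F}_{cds}(x^\ast, (\statement, b)) = \bot$ \emph{independently of $b$}, so $\A_2$ in the simulated world receives a value statistically independent of $b$ and hence guesses $b$ correctly with probability exactly $\frac12$; thus $\Pr[\mcEZ(\secparam,\epsilon) = 1] = \frac12$. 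Combining with Step 1, $\frac12 \E_\st[\Adv_\st(\A_2)] < \epsilon$ whenever the extracted $x^\ast$ fails to be a witness. Contrapositively, if $\Adv_\st(\A_2) > 2\epsilon$ on a non-negligible fraction of $\st$, then $x^\ast$ is a valid witness with probability bounded away from $0$; a Markov/averaging argument (choosing the context-security slack parameter a polynomial factor below $\epsilon$) upgrades this to "$x^\ast$ is a valid witness except with probability $< \epsilon$ over $\st$, conditioned on $\Adv_\st(\A_2) > \epsilon$."

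\textbf{Step 3: Define the witness-knowledge extractor.} Set $\Ex(\A_2, \st, \epsilon)$ to run $\ExtZ$ with slack parameter $\epsilon' = \poly^{-1}\cdot\epsilon$ (on freshly sampled coins $\coins^\ast$) to obtain $(x^\ast, \aux)$ and output $x^\ast$; in the lCDS$\to$big-key context the witness is literally the secret key, so returning $x^\ast$ returns the key. The runtime is $\poly \cdot T_2 = \poly[\secp, 1/\epsilon]$ as required since $\ExtZ$ has overhead $\poly \cdot T_2$ and $T_2$ is polynomial. Finally we read off the required bound $\Pr[\Adv_\st(\A_2) > \epsilon \text{ and } \Ex(\A_2,\st,\epsilon) \neq \sk] < \epsilon$ from Step 2 (possibly after rescaling $\epsilon$ by constants, which is harmless since the statement quantifies over all inverse polynomials).

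\textbf{The main obstacle} is the quantitative bookkeeping between the context-security slack parameter and the target failure probability $\epsilon$: context security only promises closeness of \emph{overall accept probabilities} of $\mcZ$ and $\mcEZ$, whereas witness-knowledge security demands a bound on a \emph{joint event} over the randomness of $\st$ ("$\Adv_\st$ large AND extraction fails"). Bridging these requires a careful averaging argument — one must argue that if extraction failed (i.e. $x^\ast$ is not a witness) on a set of $\st$ of measure $\geq \epsilon$ on each of which $\Adv_\st(\A_2) > \epsilon$, then on that set alone the simulated and real worlds differ by $\approx \epsilon^2$ in expectation, contradicting context security once its slack is set below $\epsilon^2$ (times constants). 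A secondary subtlety is ensuring the sender oracle in the context is queried in a way faithful to how $\Enc$ is invoked in the witness-knowledge experiment (i.e. that $\mc{F}_{cds}$ with a true witness does return $\mes = b$, which follows from lCDS correctness), so that the real context world $\mcZ$ genuinely reproduces the witness-knowledge experiment.
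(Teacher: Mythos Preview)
Your proposal is correct and follows essentially the same approach as the paper: the same context construction (random challenge bit, single oracle query), the same key observation that an invalid extracted witness forces $\Pr[\mcEZ = 1 \mid \st] = \tfrac12$, and the same averaging argument that a ``bad'' set of $\st$ of measure $\geq \epsilon$ with $\Adv_\st(\A_2) > \epsilon$ yields an $\epsilon^2$ gap contradicting context security. One small inconsistency to fix: in Step~3 you set the context-security slack to $\poly^{-1}\cdot\epsilon$, but as you yourself identify in the ``main obstacle'' paragraph (and as the paper does), the right choice is $\Theta(\epsilon^2)$ (the paper uses $\epsilon^2/2$); also note that for this theorem the extractor should output a \emph{witness} for $\statement$ rather than a ``secret key'' --- the identification with $\sk$ only happens in the subsequent big-key construction.
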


\begin{proof}
Let $\A = (\A_1,\A_2)$ be an adversary against the witness-knowledge security of $\lCDS$. Define a protocol context $\mcZ = (\mcZ_1,\mcZ_2)$ as follows. $\mcZ_1$ takes as input a crs $\crs$ and runs the first stage $S_1(1^\secp,\A_1)$ of the security experiment $(S_1,S_2)$ defined in Definition \ref{def:key_knowledge}, however using its own input $\crs$ as a common reference string instead of generating it in $S_1$. If the output of $S_2$ is $\statement,\com,\st$, $\mcZ_1$ sets $\rec = (\statement,\com)$ and outputs $(\st,\rec)$.

$\mcZ_2$ takes as input the state $\st$, chooses a random bit $b \pick \bin$ and essentially runs the second stage $S_2(\A_2,b,\crs,\statement,\com,\st)$ of the experiment in Definition \ref{def:key_knowledge}, with the difference that it does not compute $\ciph^\ast$ by itself, but uses its oracle access to $\Enc(\crs,\statement,\com,\cdot)$ to compute the challenge ciphertext $\ciph^\ast$.

By construction of $\mcZ_1$, it holds that $\mcZ_1(\crs)$ faithfully emulates the first stage of the experiment $S_1(1^\secp,\A_1)$. Moreover, for any $\crs,\statement,\com,\st$ it holds that $\mcZ_2^{\Enc(\crs,\statement,\com,\cdot)}(\st)$ faithfully simulates the second stage of the experiment $S_2(\A_2,b,\crs,\statement,\com,\st)$ for a randomly chosen bit $b \pick \bin$. Thus it follows that
\[
\Pr[\mc{Z}^{\Enc(\crs,\statement,\com,\cdot)}_2(1^\secp) = 1] = \frac{1}{2} + \frac{1}{2} \Adv_\st(\A_2).
\]
Since $\lCDS$ is context-secure, there exists a context-extractor $\ExtZ$ such that for every inverse polynomial $\epsilon$ it holds that
\begin{equation}\label{eq:context}
    |\Pr[\mcZ(\secparam) = 1] - \Pr[\mcEZ(\secparam,\epsilon) = 1]| < \epsilon.
\end{equation}

We can now define the extractor $\Ex$ for witness-knowledge security via $\Ex(\A_2,\st,\epsilon) = \ExtZ(\crs,\st,(\statement,\com),\epsilon^2/2)$. Note that we provide $\epsilon^2/2$ instead of $\epsilon$ to $\ExtZ$. Now we claim that for every inverse polynomial $\epsilon$ it holds that
\[
\Pr[\Adv_\st(\A_2) > \epsilon \text{ and } (\statement,\Ext(\A_2,\st,\epsilon)) \notin \witrel_\lang] \leq \epsilon.
\]
If this was not the case, then there exist inverse polynomial $\epsilon$ such that
\[
\Pr[\Adv_\st(\A_2) > \epsilon \text{ and } (\statement,\Ext(\A_2,\st,\epsilon)) \notin \witrel_\lang] > \epsilon
\]
for infinitely many $\secp$.

Call $\st$ \textsf{good} if $\Pr[\Adv_\st(\A_2) > \epsilon \text{ and } (\statement,\Ext(\A_2,\st,\epsilon)) \notin \witrel_\lang$. I.e. the above states that $\Pr[\st \text{ good }] > \epsilon$. Now note that if $\st$ is good, then
\[
\Pr[\mcZ(\secparam) = 1 | \st \text{ good}] = \frac{1}{2} + \frac{1}{2} \Adv_\st(\A_2) \geq \frac{1}{2} + \frac{\epsilon}{2},
\]
as $\st \text{ good}$ implies that $\Adv_\st(\A_2) > \epsilon$. Moreover, $\st \text{ good}$ also implies that $(\statement,\Ext(\A_2,\st,\epsilon)) \notin \witrel_\lang$. Consequently, if $\witness = \Ext(\A_2,\st,\epsilon)$ then the functionality $\mc{F}_{cds}(\witness,(\statement,\cdot))$ always outputs $\bot$, and therefore challenge ciphertext $\ciph^\ast$ in the experiment $\mcEZ$ is independent of the bit $b$. It follows that
\[
\Pr[\mcEZ(1^\secp,\epsilon^2/2) = 1 | \st \text{ good}] = \frac{1}{2}.
\]
Thus, it follows that
\begin{align*}
|\Pr[\mcZ(\secparam) = 1] &- \Pr[\mcEZ(\secparam,\epsilon^2/2) = 1]| \\
&\geq \underbrace{|\Pr[\mcZ(\secparam) = 1  | \st \text{ good}] - \Pr[\mcEZ(\secparam,\epsilon^2/2) = 1 | \st \text{ good}]|}_{\geq \epsilon / 2} \underbrace{\Pr[\st \text{ good}]}_{\geq \epsilon}\\
&\geq \epsilon^2 / 2.
\end{align*}
This however is in contradiction to \eqref{eq:context}. We conclude that for every inverse polynomial $\epsilon$ it holds that
\[
\Pr[\Adv_\st(\A_2) > \epsilon \text{ and } (\statement,\Ext(\A_2,\st,\epsilon)) \notin \witrel_\lang] \leq \epsilon,
\]
which shows that $\lCDS$ is witness-knowledge secure.
\end{proof}

\subsection{Big-Key Encryption from Malicious Laconic Conditional Disclosure of Secrets}

\def \PKE{\ms{PKE}}

We will now provide a construction of a big-key encryption scheme from laconic CDS. The basic idea is simple: Let $\Hash_\key: \bin^\ell \to \bin^\secp$ be a collision-resistant hash function and consider the language $\lang = \{ (\key,h) \in \bin^\secp \ | \ \exists z \in \bin^\ell \text{ s.t. } h = \Hash_\key(z) \}$ with the witness relation $\witrel_\lang = \{ ((\key,h),z) \ | \ h = \Hash_\key(z) \}$. Now let $\lCDS = (\Setup,\Rec,\Enc,\Dec)$ be a laconic CDS for the witness relation $\witrel_\lang$. The big-key encryption scheme $\PKE = (\KeyGen,\Enc,\Dec)$ is given as follows.

\begin{itemize}
    \item $\KeyGen(1^\secp,\ell)$: Choose $\key \pick \bin^\secp$, $z \pick \bin^\ell$ and set $h \gets \Hash_\key(z)$. Compute $\crs \gets \lCDS.\Setup(1^\secp)$, $(\com,\st) \gets \lCDS.\Rec(\crs,(\key,h),z)$ and output $\pk \gets (\crs,\key,h,\com)$ and $\sk \gets (\pk,\st)$
    
    \item $\Enc(\pk = (\crs,\key,h,\com),\mes)$: Compute and output $\ciph \gets \lCDS.\Enc(\crs,(\key,h),\com,\mes)$.
    
    \item $\Dec(\sk = ((\crs,\key,h,\com),z),\ciph)$: Compute and output $\mes \gets \lCDS.\Dec(\crs,(\key,h),z,\ciph)$
\end{itemize}

Correctness of this scheme follows immediately from the correctness of $\lCDS$. Moreover, note that by the compactness of $\lCDS$ we have that $|\pk|$ and $|\ciph|$ are $\poly$ but independent of the size parameter $\ell$.

We will now show that $\PKE$ is key-knowledge secure, given that $\lCDS$ is witness-knowledge secure and the hash function $\Hash$ is collision resistant.

\begin{theorem}
Assume that $\lCDS$ satisfies witness-knowledge security and $\Hash$ is collision-resistant. Then $\PKE$ is key-knowledge secure.
\end{theorem}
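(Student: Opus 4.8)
The plan is to reduce key-knowledge security of $\PKE$ to the witness-knowledge security of $\lCDS$ (for the relation $\witrel_\lang = \{((\key,h),z) \mid h = \Hash_\key(z)\}$) together with collision resistance of $\Hash$, by a direct simulation. First I would fix a PPT adversary $\A = (\A_1,\A_2)$ against the key-knowledge security of $\PKE$ and build from it a PPT adversary $\mathcal{B} = (\mathcal{B}_1,\mathcal{B}_2)$ against the witness-knowledge security of $\lCDS$. On input $\crs$, the stage $\mathcal{B}_1$ samples $\key \pick \bin^\secp$ and $z \pick \bin^\ell$, sets $h \gets \Hash_\key(z)$, computes $(\com,\st) \gets \lCDS.\Rec(\crs,(\key,h),z)$, forms $\pk = (\crs,\key,h,\com)$ and $\sk = (\pk,\st)$ exactly as $\KeyGen$ does (reusing the challenger's $\crs$), runs $\st_\A \gets \A_1(1^\secp,\pk,\sk)$, and outputs the statement $\statement = (\key,h)$, the commitment $\com$, and the state $\st_\mathcal{B} = (\st_\A,\pk)$. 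The stage $\mathcal{B}_2$, given $(\st_\A,\pk)$ and a challenge $\ciph^\ast$, returns $\A_2(\st_\A,\pk,\ciph^\ast)$. Since by construction $\lCDS.\Enc(\crs,(\key,h),\com,\cdot) = \PKE.\Enc(\pk,\cdot)$, the machine $\mathcal{B}$ perfectly emulates the key-knowledge experiment for $\A$: the joint distribution of $(\crs,\statement,\com,\st_\mathcal{B})$ in the $\lCDS$ experiment equals that of $(\crs,(\key,h),\com,(\st_\A,\pk))$ in the $\PKE$ experiment, and $\Adv_{\st_\mathcal{B}}(\mathcal{B}_2) = \Adv_{\st_\A}(\A_2)$ for the corresponding intermediate states.

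Next I would use the $\lCDS$-extractor $\Ex_{\lCDS}$ guaranteed for $\mathcal{B}$ to define the key-knowledge extractor $\Ex$ for $\PKE$. Recall $\st$ may be assumed to contain $\pk = (\crs,\key,h,\com)$. Given $(\A,\st,\epsilon)$, the extractor $\Ex$ reconstructs $\st_\mathcal{B} = (\st,\pk)$, runs $z' \gets \Ex_{\lCDS}(\mathcal{B}_2,\st_\mathcal{B},\epsilon/2)$ (with $\mathcal{B}_2$ the trivial wrapper around $\A_2$), and outputs $\sk' = (\pk,\st')$, where $\st'$ is recomputed from $(\crs,(\key,h),z')$ via the (wlog deterministic) conventions on the $\lCDS$ receiver state, so that $\sk$ is a deterministic function of $(\pk,z)$. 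As $\Ex_{\lCDS}$ runs in time $\poly[\secp,1/\epsilon]$ and the remaining steps are polynomial, $\Ex$ is within the allowed runtime.

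Then I would do the accounting. Fix an inverse polynomial $\epsilon$; note $\epsilon/2$ is still inverse polynomial and $\Adv_{\st}(\A) > \epsilon$ implies $\Adv_{\st_\mathcal{B}}(\mathcal{B}_2) > \epsilon/2$. Suppose $\Adv_{\st}(\A) > \epsilon$ and $\Ex(\A,\st,\epsilon) \neq \sk$; since $\sk$ is determined by $(\pk,z)$ this forces $z' \neq z$. Either $((\key,h),z') \notin \witrel_\lang$, which by witness-knowledge security of $\lCDS$ (applied with parameter $\epsilon/2$) happens with probability at most $\epsilon/2$ for all but finitely many $\secp$; or $((\key,h),z') \in \witrel_\lang$ while $z' \neq z$, meaning $(z,z')$ is a collision for $\Hash_\key$, and the obvious reduction (run $\KeyGen$'s sampling, $\A_1$, and $\Ex_{\lCDS}$, then output $(z,z')$) bounds this by the collision-finding advantage, which is negligible and hence at most $\epsilon/2$ for all but finitely many $\secp$. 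A union bound yields $\Pr[\Adv_{\st}(\A) > \epsilon \text{ and } \Ex(\A,\st,\epsilon) \neq \sk] < \epsilon$ except for finitely many $\secp$, establishing key-knowledge security of $\PKE$.

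I do not expect a deep obstacle here; the two points that need care are (i) the bookkeeping of state-passing and distribution matching so that the $\lCDS$ extractor's guarantee transfers verbatim to the $\PKE$ setting, and especially (ii) pinning down the extracted preimage $z'$ to be \emph{exactly} the $z$ embedded in $\sk$ rather than merely some valid witness --- this is precisely where collision resistance of $\Hash$ enters, in combination with the normalization conventions that let us treat $\sk$ as a deterministic function of $\pk$ and $z$. The $\epsilon$-versus-$\epsilon/2$ slack is the routine device that absorbs the extra negligible collision term.
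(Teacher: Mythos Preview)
Your proposal is correct and follows essentially the same route as the paper's proof: wrap the $\PKE$ adversary into an $\lCDS$ adversary with identical distributions, invoke the $\lCDS$ witness-extractor, and use collision resistance of $\Hash$ to upgrade ``valid witness for $(\key,h)$'' to ``equals the specific preimage $z$ embedded in $\sk$''. Your treatment is in fact more careful than the paper's in two places: you make explicit the $\epsilon$-versus-$\epsilon/2$ slack to absorb the negligible collision term in the union bound, and you flag the need to recover $\sk$ from $z'$ via the normalization convention on $\lCDS$ receiver state (the paper silently identifies ``extracting $z$'' with ``extracting $\sk$'').
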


\begin{proof}
Let $\A = (\A_1,\A_2)$ be an adversary against $\PKE$ with inverse-polynomial advantage $\epsilon$. 

By an averaging argument, we can fix $\key \in \bin^\secp$ and $z \in \bin^\ell$ such that $\Adv(\A) > \epsilon$.

Thus, we get that for the statement $\statement = (\key,\Hash_\key(z))$ the adversary $\A$ has advantage $\epsilon$ against $\lCDS$. By the key-knowledge security of $\lCDS$ there exists an extractor $\Ext$ such that $z' \gets \Ext(\A,\epsilon)$ is a valid witness for $\statement$, except with probability $\epsilon$ over the choice of $\crs$ and $\com$. We claim that $z' = z$, except with negligible probability, which establishes that $\Ext$ is a key-extractor for $\A$.

To see this, assume that $z' \neq z$ with non-negligible probability $\epsilon'$. We can then use $\Ext(\A,\epsilon)$ construct a collision-finding adversary $\mathcal{B}$ against the hash function $\Hash$ as follows:
\begin{itemize}
    \item Input a hashing key $\key$
    \item Choose $z \pick \bin^\ell$ uniformly at random and set $h \gets \Hash_\key(z)$
    \item Compute $\crs \gets \lCDS.\Setup(1^\secp)$
    \item Compute $(\com,\st) \gets \lCDS.\Rec(\crs,(\key,h),z)$
    \item Compute $\st \gets \A_1(\crs,\st)$
    \item Compute $z' \gets \Ext(\A_2(\st),\epsilon)$
    \item Output $z,z'$
\end{itemize}

First notice that $\mathcal{B}$ is a PPT machine as $\A_1(\crs,\st)$ and $\Ext(\A_2(\st),\epsilon)$ are PPT. Observe that from the view of $\A$, $\mathcal{B}$ simulates the ciphertext indistinguishability experiment faithfully. Consequently, if $\Ext(\A,\epsilon)$ outputs a valid witness $z' \neq z$ with non-negligible $\epsilon'$, which contradicts the collision-resistance of $\Hash$.

Thus, we have established that
\[
\Pr[\Adv(\A_2) > \epsilon \text{ and } \Ext(\A_2,\epsilon) \neq \sk] < \negl,
\]
which concludes the proof.
\end{proof}

\section{PAC-Learning}

\def \learner{\mathcal{L}}
\def \classifier{\mathcal{C}}
\def \Gen{\mathsf{Gen}}
\def \sampler{\mathsf{Samp}}
\def \key{\mathsf{K}}

\def \model{\mathcal{h}}

\def \instance{\mathsf{x}}
\def \lab{\mathcal{l}}

\def \Adv{\mathsf{Adv}}

We will first fix some syntax and notation relating to the PAC-model. A learning task $\prob$ consists of the following objects.
\begin{itemize}
\item A set $X$ called the instance space. In our setting $X$ will canonically be a set of binary strings of fixed length.
\item A set $C$ of classes. In our setting we will always have $C = \bin$.
\item A problem generator algorithm $\Gen(1^\secp)$, which is a randomized algorithm which takes as input a parameter $1^\secp$ and generates a private state $\st$.
\item An algorithm $\sampler$ called instance sampler. $\sampler_\st(c)$ is indexed by a private state $\st$, takes as input a class-identifier $c \in C$ and ouputs a sample $\point \in X$.
\end{itemize}

The goal of a learning task is, given a list of labeled samples of the form $(c,\sampler_\st(c))$ to \emph{train} an efficient classifier which identifies instances with classes. We formalize the process of learning and classifying via the following two algorithms $\learner$ and $\classifier$.

\begin{itemize}
\item The learning algorithm $\learner$, takes as input a list of $m$ labeled samples $(c_1,\point_1),\dots,(c_m,\point_m)$ and computes a model/hypothesis $\model$. We will also write $\learner^{\sampler_\st(\cdot)}$ to denote that $\learner$ is given access to an unbounded number of samples via oracle access to $\sampler_\st(\cdot)$.
\item The classification algorithm $\classifier$ receives as input a model $\model$ and an instance $\point$ and outputs a class $c \in C$.
\end{itemize}

Fix a secret state $\st$ of the learning task $\prob$. We define the advantage of a classifier $\classifier$ with hypothesis $\model$ by
\[
\Adv_{\prob,\st}(\classifier_\model) = \Pr[\classifier_\model(\sampler_\st(c)) = c] - \frac{1}{|C|},
\]
where the probability is taken over the the random coins of $\sampler$ and the random choice of $c \pick C$. The advantage of a classifier measures how much better it performs on average compared to just blindly guessing the class of a given instance. 

We can now define the PAC model.
\begin{definition}[The PAC Model]
Let $\prob = (X,C,\Gen,\sampler)$ be a learning task and let $\epsilon,\delta > 0$. We say that $\prob$ is efficiently $(\epsilon,\delta)$-PAC-learnable, if there exist PPT algorithms $\learner$ and $\classifier$ such that the following holds:
\[
\Pr[ \Adv_{\prob,\st}(\classifier_\model) \geq \epsilon ] \geq  1 - \delta,
\]
where $\st \gets \Gen(1^\secp)$ and  $\model \gets \learner^{\sampler_\st(\cdot)}$. Here, the probability is taken over the random coins of $\Gen$, $\learner$ and the oracle $\sampler_\st(\cdot)$.
\end{definition}

We will only consider problems with two classes, i.e. $C = \bin$. Thus the expression for the advantage simplifies to 
\[
\Adv_{\prob,\st}(\classifier_\model) = \Pr[\classifier_\model(\sampler_\st(b)) = b] - \frac{1}{2},
\]
where the probability is taken over the the random coins of $\sampler$ and the random choice of $b \pick \bin$.

In terms of efficiency, we are interested in learning algorithms which produce models $\model$ of minimal size. 
Albeit there is no direct requirement for minimal model size, 
a growing body of works in ML focuses on obtaining small models ~\cite{crowley2018pruning,frankle2018lottery,han2015learning,luo2017thinet}. More specifically, this can also be seen as a non-triviality requirement for the learning algorithm in that the trivial strategy of just storing its input in the model $\model$ fails at this requirement.

\subsection{Robust Learning}

We will now consider learning under adversarial examples, that is we consider how well a classifier performs on inputs that are perturbed by an adversary. A perturbation adversary is an algorithm $\A$ which takes as input an instance $\point \in X$ and outputs a \emph{perturbed instance} $\tilde{\point} \in X$. Moreover, we will provide oracle-access to a classifier $\classifier_\model$ to $\A$ to model that $\A$ can test the behavior of $\classifier_\model$ on adversarial examples. Robust classification is clearly impossible against adversaries which are allowed to tamper arbitrarily. Hence, to provide a meaningful definition we need to constrain the adversary. This is typically achieved by giving the adversary a \emph{perturbation budget} specified by a \emph{metric} on the instance space $X$. Let $\metric$ be a metric on $X$. We say that an adversary $\A$ has budget $B$, if it holds for all $\point \in X$ that $\metric(\point,\A(\point)) < B$. For simplicity in the following, we will always normalize the adversary's budget to 1. This can always be achieved by rescaling the metric.

We will consider different flavors of robustness. A \emph{strongly robust classifier} will not lose its advantage, even if it receives adversarial examples as input. Fix a secret state $\st$ and a model $\model$. For a perturbation adversary $\A$, define the advantage under adversarial action as
\[
\Adv_{\A,\st}(\classifier_\model) = \Pr[\classifier_\model(\A^{\classifier_\model(\cdot)}(\st,b,\sampler_\st(b))) = c] - \frac{1}{2},
\]
where the probability is taken over the random choice of $b \pick \bin$, the random coins of $\sampler_\st$ and the random coins of $\A$.

\begin{definition}[Strongly Robust PAC Learning]
Let $\prob = (X,C,\Gen,\sampler)$ be a learning task. We say that $\prob$ is strongly robustly $(\epsilon,\delta)$-PAC-learnable in a metric $\metric$, if there exist PPT algorithms $\learner$ and $\classifier$ such that it holds for every $\metric$-constrained PPT adversary $\A$ that
\[
\Pr[\Adv_{\A,\st}(\classifier_\model) \geq \epsilon] \geq 1 - \delta
\]
where the probability is taken over the random coins of $\Gen$, $\learner$ and the oracle $\sampler_\st(\cdot)$.
\end{definition}

Note that in this definition, an adversary already wins if it diminishes the advantage of the classifier $\classifier_\model$. That is, the adversary does not necessarily need to always fool the classifier. We will now define a notion we call weak robustness which essentially requires that a successful adversary must fool the classifier $\classifier_\model$ into producing the opposite output. 

\begin{definition}[Weakly Robust PAC-Learning]\label{def:wrpac}
Let $\prob = (X,C,\Gen,\sampler)$ be a learning task. We say that $\prob$ is weakly robustly $(\epsilon,\delta,\gamma,\eta)$-PAC-learnable, if there exist PPT algorithms $\learner$ and $\classifier$ such that
\begin{enumerate}
\item $(\learner,\classifier)$ is a $(\epsilon,\delta)$-PAC learner for $\prob$
\item It holds for every PPT adversary $\A$ (with oracle access to $\classifier_\model$) and all $b \in \bin$ that
\[
| \Pr[\classifier_\model(\sampler_\key(b)) = b] - \Pr[\classifier_\model(\A^{\classifier_\model(\cdot)}(\sampler_\st(1-b))) = b] | > \gamma,
\]
except with probability $\eta$ over the choice of $\st \gets \Gen(1^\secp)$ and  $\model \gets \learner^{\sampler_\st(\cdot)}$.
\end{enumerate}
\end{definition}

We will typically require $\eta$ to be negligible, and then omit mentioning it. Condition 2 in Definition \ref{def:wrpac} essentially requires that a weakly robust classifier $\classifier_\model$ distinguishes adversarial examples from well-formed samples with advantage $\gamma$. Conversely, an adversary $\A$ fools a classifier $\classifier_\model$ if adversarial examples for class $1-b$ producesd by $\A$ are indistinguishable from well formed samples of class $b$ for $\classifier_\model$.

\subsection{Simplified Learning}

We will now consider a setting of \emph{simplified learning}, where the learning algorithm receives the secret state $\st$ as input instead of getting access to samples of $\sampler_\st(\cdot)$. As the name suggests, in the simplified setting the task of the learning algorithm is made easier as it could now just simulate a sample oracle $\sampler_\st(\cdot)$. However, recall that our goal is to construct a learning problem $\prob$ for which no small-size classifier can classify robustly in an adaptively chosen target metric. Thus, by making the job of the learning algorithm easier this simplification will only strengthen our results.

We will now show a generic transformation which transforms a learning problem $\prob$ in the simplified setting into a PAC learnable problem $\prob'$ while only slightly increasing the size of the samples $\point$. The idea is to append small shares of the secret state $\st$ to the samples. Given sufficiently many shares, the learning algorithm can reconstruct the secret state $\st$ and use a learning algorithm in the simplified model.

\def \field{\mathbb{F}}

Let $\prob$ be a classification task with generation algorithm $\Gen$ and sampler $\sampler$. Assume that $\Gen$ outputs a state $\st$ of size $\ell$. Let $\field$ be a finite field of size $2^\secp$, and $t = \ell / \secp$. Consider the following problem $\prob'$, which has the same generation algorithm $\Gen$ but uses the following sampler $\sample'$. 

\begin{description}
\item[$\sampler'_\st()$]: Run $\point \gets \sampler_\st()$. Write $\st$ as $s = (s_1,\dots,s_t) \in \field^t$. Choose a uniformly random $z \pick \field$, compute $\gamma \gets \sum_{i = 1}^t s_i z^{i - 1}$. Output $\point' \gets (\point,z,\gamma)$.
\end{description}

First note that the samples $\point'$ of $\prob'$ are of size $|\point| + 2 \secp$ and therefore small. We will now show that $\prob'$ is PAC-learnable, given that $\prob$ admits simplified learning.

Moreover, it follows immediately that any robust classifier for $\prob$ implies a robust classifier for $\prob'$. On the other hand, if $\prob$ is not robustly learnable, then neither is $\prob'$. 

\begin{theorem}
Assume that $\prob$ admits simplified robust learning. Then $\prob'$ robustly PAC-learnable.
\end{theorem}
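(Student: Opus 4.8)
The plan is to derive an ordinary PAC learner--classifier pair $(\learner',\classifier')$ for $\prob'$ from the assumed simplified robust learner $(\learner,\classifier)$ for $\prob$, by having $\learner'$ first \emph{reconstruct} the secret state $\st$ from the Shamir shares carried by the samples and then invoke $\learner$ on it. Concretely, $\learner'$ queries the sample oracle $\sampler'_\st(\cdot)$ some $m$ times, obtains samples $\point'_i=(\point_i,z_i,f(z_i))$ (where $f(X)=\sum_{i=1}^t s_i X^{i-1}$ and $\st=(s_1,\dots,s_t)\in\field^t$, $t=\ell/\secp$), discards everything but the pairs $(z_i,f(z_i))\in\field^2$, and — provided the $z_i$ are pairwise distinct — interpolates $f$ through any $t$ of them to recover $(s_1,\dots,s_t)=\st$. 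It then runs $\model\gets\learner(\st)$, or simulates the oracle $\sampler_\st(\cdot)$ for $\learner$ from the recovered $\st$ if $\learner$ expects sample access, and outputs $\model$. Two things are worth stressing: the reconstructed $\st$ is used only internally and never enters the output, so the model $\model$ produced by $\learner'$ is \emph{exactly} a model $\learner$ would produce in the simplified setting, with unchanged size — essential, since model size is the central resource here; and $\learner'$ is PPT, as interpolation over $\field$ costs $\poly(t,\secp)=\poly(\ell)$.

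For the classifier I would set $\classifier'_\model(\point,z,\gamma):=\classifier_\model(\point)$, i.e. $\classifier'$ discards the share coordinates. Since the first component of $\sampler'_\st(b)$ is distributed exactly as $\sampler_\st(b)$, this gives $\Adv_{\prob',\st}(\classifier'_\model)=\Adv_{\prob,\st}(\classifier_\model)$ for every $\st$ and every model $\model$. Combining with the reconstruction step: conditioned on $\learner'$ recovering $\st$, its output is distributed as $\learner(\st)$, so $\Pr[\Adv_{\prob',\st}(\classifier'_\model)\ge\epsilon]\ge\Pr[\text{reconstruction succeeds}]\cdot\Pr[\Adv_{\prob,\st}(\classifier_\model)\ge\epsilon]$. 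Each $z_i$ is uniform in a field of size $2^\secp$ and $t=\ell/\secp=\poly(\secp)$, so a birthday/union bound gives $\Pr[\exists i\ne j:\, z_i=z_j]\le\binom{m}{2}2^{-\secp}=\negl$; taking $m=t$ (plus a small slack, if desired) already makes reconstruction succeed except with negligible probability. Folding that negligible loss into $\delta$, $(\learner',\classifier')$ is an $(\epsilon,\delta+\negl)$-PAC learner for $\prob'$.

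For robustness I would transfer the guarantee coordinate-by-coordinate, exploiting that $\classifier'$ never inspects the shares. Extend the metric $\metric$ in which $\prob$ is simplified-robustly learnable to the metric $\metric'$ on $X\times\field\times\field$ that agrees with $\metric$ on the $\point$-component; the weights $\metric'$ assigns to the share coordinates are immaterial. In the strong-robustness experiment the perturbation adversary is given $(\st,b,\sampler'_\st(b))$, so the reduction is immediate: from a $\metric'$-constrained adversary $\A'$ against $\classifier'_\model$ build a $\metric$-constrained adversary $\A$ against $\classifier_\model$ that, on input $(\st,b,\point)$, computes the consistent share $(z,f(z))=(z,\sum_i s_i z^{i-1})$, runs $\A'(\st,b,(\point,z,f(z)))$ answering each oracle query $(\tilde\point,\tilde z,\tilde\gamma)$ by forwarding $\tilde\point$ to its own $\classifier_\model$ oracle, and outputs the $\point$-component of $\A'$'s output; $\A$ is PPT, $\metric$-constrained (the $\point$-distance is bounded by the $\metric'$-distance), and fools $\classifier_\model$ precisely when $\A'$ fools $\classifier'_\model$ since $\classifier'_\model(\cdot)=\classifier_\model(\cdot|_X)$. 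Hence strong robustness of $(\learner,\classifier)$ for $\prob$ in $\metric$ yields strong robustness of $(\learner',\classifier')$ for $\prob'$ in $\metric'$ with the same parameters, up to the same negligible additive loss in $\delta$.

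The main obstacle is the weak-robustness case: there the perturbation adversary is handed only $\sampler_\st(1-b)$, not $\st$, so a $\prob$-adversary cannot produce a \emph{consistent} share $(z,f(z))$ with which to feed a $\prob'$-adversary. I would handle this by inserting a hybrid step that replaces the true share by a uniformly random, independent field element before the clean reduction above applies: a single evaluation point $(z,f(z))$ imposes only one $\GF(2^\secp)$-linear constraint on $\st$, i.e. about $\secp\ll\ell$ bits of information, and by the key-knowledge/leakage-resilience of the underlying big-key encryption this much leakage about $\st$ cannot help any efficient adversary against $\classifier$; so the ``real share'' and ``fake share'' worlds are indistinguishable (alternatively, one simply strengthens the simplified weak-robustness notion to also give the adversary $\st$, which only weakens the hypothesis and is harmless). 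Everything else is bookkeeping that rides on the two facts that $\classifier'$ ignores the appended shares and that the reconstructed $\st$ never inflates the model.
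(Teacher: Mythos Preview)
Your construction and core argument are exactly what the paper does: $\learner'$ collects $t$ samples, interpolates $f$ from the pairs $(z_i,\gamma_i)$ to recover $\st$, then runs $\learner(\st)$; $\classifier'$ strips the share and calls $\classifier_\model$ on the $\point$-component; distinctness of the $z_i$ holds except with negligible probability (the paper uses the bound $t\cdot 2^{-\secp}$). The paper's proof is in fact terser than yours --- after giving the construction and the distinctness argument it simply asserts ``the claim now follows as $(\learner,\classifier)$ are a pair of robust learner and classifier for $\prob$'' without spelling out the robustness reduction or distinguishing strong from weak robustness; your explicit reduction for strong robustness and your discussion of the share-consistency obstacle in the weak case go beyond what the paper provides.
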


\begin{proof}
Let $(\learner,\classifier)$ be a pair of robust learners and classifiers for $\prob$. We will construct $(\learner',\classifier')$ as follows.
\begin{itemize}
    \item $\learner'$: Query $t$ samples $(\point'_i,b_i)$ of $\prob$, where $\point'_i = (\point_i,z_i,\gamma_i)$. Interpolate a polynomial $f(X) = \sum_{i = 1}^t s'_i X^{i - 1}$ such that $f(z_i) = \gamma_i$. Parse $(s'_1,\dots,s'_t) = \st$. Compute and output $\model \gets \learner(\st)$.
    
    \item $\classifier'_\model(\point')$: Parse $\point' = (\point,z,\gamma)$, compute and output $b' \gets \classifier_\model(\point)$.
\end{itemize}

We will now briefly argue that $(\learner',\classifier')$ is a pair of robust learner and classifier.

First note that the $z_i$ are all distinct, except with probability $t \cdot 2^{-\secp}$, which is negligible. Consequently, the $(z_i,\gamma_i)$ uniquely specify the polynomial $f(X) = \sum_{i = 1}^t s_i X^{i-1}$ and it holds for all $i \in [t]$ that $s'_i = s_i$.

The claim now follows as $(\learner,\classifier)$ are a pair of robust learner and classifier for $\prob$.
\end{proof}

\subsection{Definition of the Learning Task}

We will now provide a task which is not robustly learnable if the target metric is not known ahead of time. We will start with a high-level description of the task. The problem is parametrized by a vector of keys for a big-key encryption scheme. We will refer to these keys as \emph{feature keys}. A sample of this task consists of a vector of ciphertexts, each one encrypting the identifier of the class.

It follows straightforwardly that a single feature key is sufficient to classify this task non-robustly. That is, the key $\sk_i$ allows to decrypt the ciphertext $\ciph_i$, yielding the class $b$. However, learning a feature keys is \emph{costly} as they are large in size, i.e. storing the key $\sk_i$ requires $\paramEll$ bits of storage.

Turning to robust classification, we will define our metrics in a way that allows the adversary to manipulate exactly $k$ out of the $\feaNum$ features. More specifically, the metric is indexed by a set $I \subseteq [\feaNum]$. The adversary will be allowed to arbitrarily manipulate features with index in $I$, whereas all features with index in $[\feaNum] \backslash I$ are \emph{protected}, that is the adversary is not allowed to manipulate such features.

It is thus sufficient to have a single feature key with index outside of $I$ and be aware of this fact to classify robustly. Consequently, if the learning algorithm is aware of the target metric, there is a simple robust classifier.

On the other hand, we will show that if the target metric can be chosen adaptively depending on the classifier, then there is an attack which fools the classifier with high probability.

The idea of this attack is that the adversary will be able to learn which keys the classifier knows by just having black box access to the classifier. Thus, the adversary can then choose the target metric in such a way that none of the features for which the classifier knows the keys are protected.

\begin{definition}
Let $\PKE = (\KeyGen,\Enc,\Dec)$ be a big-key encryption scheme and let $n,\ell$ be integers and $\secp$ be a security parameter. The problem $\prob_{n,\ell}$ is defined by the following algorithms $(\Gen,\sampler)$.
\begin{itemize}
	\item $\Gen(1^\secp)$: For $i = 1,\dots,\feaNum$ generate keys $(\pk_i,\sk_i) \gets \KeyGen(1^\secp,\ell)$ and output a state $\st \gets ((\pk_1,\sk_1),\dots,(\pk_\feaNum,\sk_\feaNum))$.
	\item $\sampler_\st(b \in \bin)$: For $i = 1,\dots,\feaNum$ compute $\ciph_i \gets \Enc(\pk_i,b)$. Output feature vector $(\ciph_1,\dots,\ciph_\feaNum)$.
\end{itemize}
\end{definition}

\subsection{A strongly robust small-size classifier in one target metric}

We will first provide a learning algorithm $\learner$ and a classifier $\classifier$ which robustly classifies problem $\prob{\feaNum,\ell}$ in a given target metric $\metric_\targetMSet$ which is explicitly provided to the learning algorithm via $\targetMSet$. The learning algorithm $\learner$ is provided in the simplified model in which it receives the private state $\st$ generated by $\Gen$ as input. In this construction, the model will have size $\ell + \poly$ and thus be small.

\begin{description}
\item[Learning Algorithm $\learner(\targetMSet,\st)$]:
\begin{itemize}
	\item Parse $\st \gets ((\pk_1,\sk_1),\dots,(\pk_\feaNum,\sk_\feaNum))$
	\item Fix an index $i^\ast \in \targetMSet$
	\item Set $\model \gets (i^\ast,\sk_{i^\ast})$ and output $\model$.
\end{itemize}

\item[Classifier $\classifier_\model(\point)$]:
\begin{itemize}
	\item Parse $\model = (i^\ast,\sk_{i^\ast},\ciph_{i^\ast})$
	\item Parse $\point = (\ciph_1,\dots,\ciph_\feaNum)$
	\item Compute and output $b \gets \Dec(\sk_{i^\ast})$
\end{itemize}

\end{description}

First note that the model $\model$ is of size $\ell + \log(\feaNum)$ and is therefore small.

We will briefly argue that $(\learner,\classifier)$ robustly classifies $\prob_{\feaNum,\ell}$. First assume that $\point = (\ciph_1,\dots,\ciph_\feaNum)$ is a sample of $\prob_{n,\ell}$ generated by $\sampler_\st(b)$. Then each $\ciph_i$ is of the form $\ciph_i = \Enc(\pk_i,b)$. Consequently, $\ciph_{i^\ast} = \Enc(\pk_{i^\ast},b)$ and by correctness of $\PKE$ it follows that $\Dec(\sk_{i^\ast},\ciph_{i^\ast}) = b$ and we get that $\classifier_\model(\point)$ outputs the correct class $b$.

Now let $\tilde{\point}$ be an adversarial example such that $\metric_{\targetMSet}(\point,\tilde{\point}) < 1$. Write $\tilde{\point} = (\tilde{\ciph}_1,\dots,\tilde{\ciph}_\feaNum)$. Recall that the features with index $i \in \targetMSet$ are protected, that is if $\metric_{\targetMSet}(\point,\tilde{\point}) < 1$ it must hold for all $i \in \targetMSet$ that $\tilde{\ciph}_i = \ciph_i$. But since the index $i^\ast$ is chosen such that $i^\ast \in \targetMSet$, it holds that $\tilde{\ciph}_{i^\ast} = \ciph_{i^\ast}$. Consequently, by the correctness of $\PKE$ we again get that $\Dec(\sk_{i^\ast},\ciph_{i^\ast}) = b$ and we get that $\classifier_\model(\point)$ outputs the correct class $b$.

Thus, we conclude that $\classifier_\model$ robustly classifies $\prob$ in the target-metric $\metric_\targetMSet$.

\subsection{A strongly robust large-size classifier any supported metric}

We will now show that there exists a learning algorithm and a large-size classifier for the problem $\prob_{\feaNum,\ell}$ which is robust in any supported metric $\metric_\targetMSet$. This demonstrates that robust classification of $\prob_{\feaNum,\ell}$ for an after-the-fact chosen metric is well-defined. That is, choosing the metric after the fact does not make it impossible to classify robustly, but this comes at the cost of a large description size of the classifier.

Our learning algorithm is again provided in the simplified model. Our learning algorithm and classifier follow the naive strategy: The learning algorithm learns the keys for all features whereas the classifier decrypts all features and recovers the class $b$ by making a majority decision.

\begin{description}
\item[Learning Algorithm $\learner'(\st)$]:
\begin{itemize}
	\item Parse $\st \gets ((\pk_1,\sk_1),\dots,(\pk_\feaNum,\sk_\feaNum))$
	\item Set $\model \gets (\sk_1,\dots,\sk_\feaNum)$ and output $\model$.
\end{itemize}

\item[Classifier $\classifier'_\model(\point)$]:
\begin{itemize}
	\item Parse $\model = (\sk_1,\dots,\sk_\feaNum)$
	\item Parse $\point = (\ciph_1,\dots,\ciph_\feaNum)$
	\item For all $i \in [\feaNum]$ compute $b_i \gets \Dec(\sk_i,\ciph_i)$
	\item Set $b$ to be the majority of the $b_i$, i.e. if $\sum_{i = 1}^\feaNum b_i > \feaNum/2$ set $b = 1$, otherwise $b = 0$.
	\item Output $b$
\end{itemize}

\end{description}

Note that the model $\model$ is of size $\feaNum \cdot \ell$ and therefore large.

We will now argue that $\classifier'_\model$ robustly classifies $\prob_{\feaNum,\ell}$ in any metric $\metric_\targetMSet$ for which $|\targetMSet| > \feaNum / 2$. Thus fix a $\targetMSet$ with $|\targetMSet| > \feaNum / 2$. Let $\point =(\ciph_1,\dots,\ciph_\feaNum)$ be a sample for class $b \in \bin$ and let $\tilde{\point} = (\tilde{\ciph}_1,\dots,\tilde{\ciph}_\feaNum)$ be an adversarial example with $\metric_\targetMSet(\point,\tilde{\point}) < 1$. By the definition of $\metric_\targetMSet$, it holds for all $i \in \targetMSet$ that $\tilde{\ciph}_i = \ciph_i$. Consequently, it holds for all $i \in \targetMSet$ that $b_i = \Dec(\sk_i,\tilde{\ciph}_i) = b$. However, since $|\targetMSet| > \feaNum/2$, it holds that the majority of all $b_i$ is $b$. Consequently, $\classifier'_\model(\tilde{\point})$ outputs the correct class $b$. We conclude that $\classifier'_\model$ robustly classifies $\prob$ any after-the fact chosen admissible metric $\metric_\targetMSet$.

\subsection{Impossibility of weakly robust Learning for adaptively chosen Metric}

We will now provide an efficient perturbation adversary $\A$ which fools any size-bounded classifier for $\prob_{\feaNum,\ell}$ in an adaptively chosen target metric $\metric_\targetMSet$. The idea of this adversary is that $\A$ can detect which keys the classifier $\classifier_\model$ knows by testing whether $\classifier$ notices modifications in these indices.

\begin{theorem}
Assume that $(\learner,\classifier)$ is a pair of learner and classifier for problem $\prob_{\feaNum,\ell}$ such that $\learner$ produces models $\model$ of size at most $\feaNum/2 \cdot \ell$. Let $\epsilon > 0$ and assume that $\Adv_{\prob_{\feaNum,\ell}}(\classifier_\model) \geq \frac{1}{2} + \epsilon$, except with probability $\delta$ over the choice of $\model$. Then for any $\gamma > 0$ there exists a PPT perturbation adversary $\A$ such that
\[
| \Pr[\classifier_\model(\sampler_\key(b)) = b] - \Pr[\classifier_\model(\A^{\classifier_\model(\cdot)}(\sampler_\st(1-b))) = b] | < \gamma,
\]
where the runtime of $\A$ is $\poly[\secp,1/\gamma]$. In other words, $\prob_{\feaNum,\ell}$ is not weakly robustly learnable with models of size at most $\feaNum/2 \cdot \ell$.
\end{theorem}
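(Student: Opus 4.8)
The plan is to build, from $(\learner,\classifier)$, a single efficient perturbation adversary $\A$ that has oracle access to $\classifier_\model$ and knows the (short) public keys $\pk_1,\dots,\pk_\feaNum$, and that operates in two phases. In a \emph{probing phase} $\A$ determines the set $J\subseteq[\feaNum]$ of coordinates on which $\classifier_\model$ is \emph{sensitive}, i.e.\ notices when the bit encrypted in that coordinate is flipped; in an \emph{attack phase}, given a sample $\point=(\ciph_1,\dots,\ciph_\feaNum)$ of class $1-b$, $\A$ outputs $\tilde{\point}$ obtained by replacing $\ciph_i$ with a fresh $\Enc(\pk_i,b)$ for each $i\in J$, and declares the target metric to be $\metric_\targetMSet$ for any admissible set $\targetMSet$ (of the prescribed size) with $\targetMSet\cap J=\emptyset$. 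Since $\point$ and $\tilde{\point}$ differ only on the coordinates in $J$, which are unprotected by $\metric_\targetMSet$, the perturbation is within budget: $\metric_\targetMSet(\point,\tilde{\point})=|J|/\feaNum<1$. Thus the theorem reduces to two claims: (i) $|J|$ is small enough that such a $\targetMSet$ exists, and (ii) $\classifier_\model$ classifies $\tilde{\point}$ essentially the same way it classifies a genuine sample of class $b$.

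For the probing phase, I would fix an internal accuracy parameter $\gamma_0:=\gamma/(10\feaNum)$ (inverse-polynomial, as $\gamma$ is) and a sequence of hybrid problems $\prob_0,\prob_1,\dots,\prob_\feaNum$, where $\prob_0=\prob_{\feaNum,\ell}$ and $\prob_j$ differs from $\prob_{\feaNum,\ell}$ only in that, in every already-processed coordinate $i\le j$ declared \emph{insensitive}, the ciphertext $\ciph_i$ encrypts the flipped class. Processing $i=1,\dots,\feaNum$, $\A$ uses $\pk_1,\dots,\pk_\feaNum$ and the current flip pattern to sample from $\prob_{i-1}$ and from $\prob_{i-1}'$ (the same problem with coordinate $i$ additionally flipped), runs $\classifier_\model$ on $O(1/\gamma_0^2)$ fresh samples of each, and by Hoeffding obtains estimates of $\Adv_{\prob_{i-1}}(\classifier_\model)$ and $\Adv_{\prob_{i-1}'}(\classifier_\model)$ accurate to $\pm\gamma_0$ except with negligible probability. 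If the estimates are within $3\gamma_0$, $\A$ declares $i$ insensitive and sets $\prob_i:=\prob_{i-1}'$; otherwise it declares $i$ sensitive, adds $i$ to $J$, and sets $\prob_i:=\prob_{i-1}$. In both cases $|\Adv_{\prob_i}(\classifier_\model)-\Adv_{\prob_{i-1}}(\classifier_\model)|<5\gamma_0$, so telescoping gives $|\Adv_{\prob_\feaNum}(\classifier_\model)-\Adv_{\prob_{\feaNum,\ell}}(\classifier_\model)|<5\feaNum\gamma_0=\gamma/2$. One then checks that the $\tilde{\point}$ produced in the attack phase from a class-$(1-b)$ sample is distributed exactly like a sample of $\prob_\feaNum$ of class $b$, so $|\Pr[\classifier_\model(\tilde{\point})=b]-\Pr[\classifier_\model(\sampler_\st(b))=b]|$ equals the above advantage gap and is $<\gamma$; this is claim (ii). (The argument is run for each fixed $b\in\bin$ with the corresponding conditional distributions; I describe the version averaged over $b$ for brevity.)

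The hard part is claim (i): bounding $|J|$ using the size restriction on $\model$. Fix $i\in J$. The test guarantees $|\Adv_{\prob_{i-1}}(\classifier_\model)-\Adv_{\prob_{i-1}'}(\classifier_\model)|>\gamma_0$, and since $\prob_{i-1},\prob_{i-1}'$ differ only in the $i$-th ciphertext, the standard hybrid reduction yields a distinguisher $\distinguisher_i$ against $\Enc(\pk_i,\cdot)$ with advantage $>\gamma_0/2$: on challenge $\Enc(\pk_i,\beta)$ it plants the challenge into coordinate $i$ of an otherwise self-generated $\prob_{i-1}$-sample, runs $\classifier_\model$, and reads off a guess for $\beta$. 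Crucially, $\distinguisher_i$ needs only the public keys, the flip pattern on coordinates $<i$ (an $\feaNum$-bit string), and $\model$ --- which is precisely why the construction uses \emph{public-key} big-key encryption: the attacker can simulate everything without the large secret keys. Packaging $\learner$ together with the first $i-1$ probing rounds into a leaker $\leaker_i$ whose hint is $\hint_i=(\model,\pk_1,\dots,\pk_\feaNum,\text{flip pattern})$, key-knowledge security of $\PKE$ gives a fixed extractor $\Ex$ which, applied to this leaker--distinguisher pair with parameter $\gamma_0/2$, outputs $\Ex(\hint_i,\gamma_0/2)=\sk_i$ except with negligible probability over the keys and coins. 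Fixing a good choice of the extractor's coins, we get for every $i\in J$ a \emph{deterministic} map from $(\model,A)$ to $\sk_i$, where the auxiliary string $A$ gathers the public keys, the full flip pattern and the extractor coins and has size $\poly[\secp,\feaNum]$ independent of $\ell$. Hence $(\model,A)$ determines the tuple $(\sk_i)_{i\in J}$; but by the big-key property the secret keys are incompressible even given the public keys (in our $\PKE$ the underlying witness $z_i\in\bin^\ell$ is uniform given $\pk_i$ up to $\secp$ bits), so $(\sk_i)_{i\in J}$ retains min-entropy at least $|J|\cdot(\ell-\poly[\secp])$ conditioned on $A$. Since $|\model|\le \feaNum/2\cdot\ell$ and $|A|=o(\ell)$, a source-coding / pigeonhole argument --- exactly the kind used for bounded-storage-model lower bounds --- shows this tuple cannot be determined with probability $\ge 1/2$ once $|J|$ is too large; the bookkeeping (using $\ell\gg\secp,\feaNum$ and the precise size bound) yields $|J|\le \feaNum/2-1$, which is exactly what leaves $\feaNum-|J|\ge\feaNum/2+1$ coordinates outside $J$ to host an admissible metric $\metric_\targetMSet$.

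Finally I would collect error terms. The probing phase fails (a bad Hoeffding estimate, or some extraction failing) only with negligible probability, and by hypothesis $\Adv_{\prob_{\feaNum,\ell}}(\classifier_\model)\ge\epsilon$ except with probability $\delta$. So except with probability $\delta+\negl$ over $\model$, $\A$ satisfies the desired inequality with the required $\gamma$, while its running time is dominated by the $O(\feaNum/\gamma_0^2)=\poly[\secp,1/\gamma]$ oracle calls made during probing. This contradicts weak robustness of $\prob_{\feaNum,\ell}$ at model size $\feaNum/2\cdot\ell$, as claimed. The main obstacle throughout is the incompressibility step in the previous paragraph: turning a per-coordinate key-knowledge extractor (which is randomized, succeeds only with high probability, and must be fed a short hint) into a clean global bound on $|J|$.
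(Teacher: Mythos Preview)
Your proposal is correct and follows essentially the same route as the paper's proof: probe the classifier coordinate by coordinate via Hoeffding estimates on a chain of hybrid distributions, use key-knowledge security to extract $\sk_i$ from the model for every sensitive coordinate, and then invoke a source-coding/incompressibility argument against the size bound on $\model$ to cap the number of sensitive coordinates; the attack then flips exactly those coordinates. The only cosmetic difference is that the paper's appendix uses $J$ for the \emph{insensitive} set (and correspondingly swaps which coordinates get rewritten), whereas you (and the paper's Section~2 outline) take $J$ to be the sensitive set; your convention makes the ``$\tilde{\point}$ is a sample of $\prob_\feaNum$ of class $b$'' identification cleanest, and your more explicit packaging of the leaker/hint and the deterministic-extractor step is exactly the bookkeeping the paper leaves implicit.
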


The following proof uses ideas relating to distinguisher dependent simulation as in~\cite{DottlingGGM19,DottlingGHMW20}.
\begin{proof}
For a subset $J \subseteq [\feaNum]$, denote by $D_J$ the following hybrid distribution.

\begin{description}
\item[Distribution $D_J(b)$]:
\begin{itemize}
\item For $i \in [\feaNum] \backslash J$ compute $\ciph_i \gets \Enc(\pk_i,b)$
\item For $i \in J$ compute $\ciph_i \gets \Enc(\pk_i,1- b)$
\item Output $(\ciph_1,\dots,\ciph_\feaNum)$.
\end{itemize}
\end{description}
That is, on $[\feaNum] \backslash J$ the $\ciph_i$ are computed as in $\prob$ by encrypting $b$, but on $J$ the $\ciph_i$ encrypt the flipped bit $1 - b$.

For a binary random variable $X \in \bin$, we will use the shorthand '''Compute an approximation of $E[X]$ with error $\delta$ of'' for the following procedure:
\begin{itemize}
\item Generate $m = \secp / \delta^2$ samples $x_1,\dots,x_m$ of $X$
\item Compute and output $\tilde{\mu} \gets \frac{1}{m} \sum_{i = 1}^m x_i$
\end{itemize}
By the Hoeffding inequality, it immediately follows that
\[
\Pr[|\tilde{\mu} - E[X]| > \delta] \leq 2 \cdot e^{-2 \secp},
\]
i.e. $\tilde{\mu}$ infact approximates $E[X]$ with an error at most $\delta$, except with negligible probability over the random choices of the approximation procedure.

Now let $\delta > 0$ be a parameter which we will set later. The perturbation adversary $\A$ is given as follows, where $\A$ gets as input a sample $\point$ of class $1-b$ and produces an adversarial example $\tilde{\point}$ which fools $\classifier_\model$ to misclassify $\tilde{\point}$ as class $b$. The strategy of $\A$ is to identify a set $J \subseteq \feaNum$ for which $\classifier_\model$ does not know the corresponding feature keys $\sk_i$.

\begin{description}
\item[Adversary $\A^{\classifier_\model(\cdot)}(\point,1-b)$]:
\begin{itemize}
\item Set $J_0 = \emptyset$
\item For $j = 1,\dots,\feaNum$:
\begin{itemize}
\item Compute and approximation $\tilde{\mu}$ of $\classifier_\model(D_J(1-b))$ with error $\delta$.
\item Compute and approximation $\tilde{\mu}'$ of $\classifier_\model(D_{J \cup \{ i \}}(1-b))$ with error $\delta$.
\item If $|\tilde{\mu}' - \tilde{\mu}| < 3 \delta$ set $J_j \gets J_{j-1} \cup \{ j \}$ otherwise $J_j = J_{j-1}$
\end{itemize}
\item If $|J_\feaNum| < n/2$ output $\bot$.
\item Let $\point = (\ciph_1,\dots,\ciph_\feaNum)$.
\item For all $i \in [\feaNum] \backslash J_\feaNum$ set $\tilde{\ciph}_i \gets \ciph_i$
\item For all $i \in J_\feaNum$ set $\tilde{\ciph}_i \gets \Enc(\pk_i,b)$
\item Output $\tilde{\point} \gets (\tilde{\ciph}_1,\dots,\tilde{\ciph}_\feaNum)$ and set the target metric to $\metric_\targetMSet$, where $\targetMSet \gets J$.
\end{itemize}
\end{description}

Let $\point$ be a sample of class $b$ and let $(\tilde{\point},\targetMSet) \gets \A^{\classifier_\model(\cdot)}(\point)$. First note that conditioned that $\A$ does not output $\bot$ it holds that $|J| < n/2$, and thus in the metric $\metric_\targetMSet$ it holds that $\metric_\targetMSet(\point,\tilde{\point}) < 1$ by the way $\targetMSet$ is chosen.

Let $\point \pick \sampler_\st(1-b)$ and $\tilde{\point} \gets \A^{\classifier_\model(\cdot)}(\point,1-b)$. Moreover let $\point' \pick \sampler_\st(1-b)$. We will now establish that 
\[
| \Pr[\classifier_\model(\point') = b] - \Pr[\classifier_\model(\tilde{\point}) = b]| < \feaNum \cdot \delta + \negl.
\]
Since we can choose the parameter $\delta$ arbitrarily small at the expense of increasing the runtime of $\A$, choosing $\delta < \gamma / \feaNum$ and the claim of the theorem follows.

Now fix a model $\model$ and consider the following hybrid experiments.

\def \Extract{\mathsf{Extract}}

\begin{itemize}
\item $\Hyb_0$: This is the real experiment, i.e. in this experiment we compute $\classifier_\model(D_\emptyset(b))$.

\item $\Hyb_i$: 
\begin{itemize}
\item Set $J_0 = \emptyset$
\item For $j = 1,\dots,i$:
\begin{itemize}
\item Compute and approximation $\tilde{\mu}$ of $\classifier_\model(D_J(b))$ with error $\delta$.
\item Compute and approximation $\tilde{\mu}'$ of $\classifier_\model(D_{J \cup \{ i \}}(b))$ with error $\delta$.
\item If $|\tilde{\mu}' - \tilde{\mu}| < 3 \delta$ set $J_j \gets J_{j-1} \cup \{ j \}$, otherwise $J_j = J_{j-1}$.
\end{itemize}
\item Compute and output $b' \gets \classifier_\model(D_J(b))$
\end{itemize}

\item $\Hyb_{\feaNum + 1}$: Same as $\Hyb_\feaNum$, but if $|J_i| < \feaNum / 2$ output $\bot$.
\end{itemize}

First notice that in $\Hyb_{\feaNum + 1}$, the output bit $b'$ is identically distributed as that of $\classifier_\model(\A^{\classifier_\model}(\point,1-b))$. We will now establish that
\[
| \Pr[\Hyb_0 = 1] - \Pr[\Hyb_{n+1} = 1] | < \feaNum \cdot \delta + \negl,
\]
which establishes that
\[
| \Pr[\classifier_\model(\sampler_\st(b)) = b] - \Pr[\classifier_\model(\A(\sampler_\st(1 - b),1-b)) = b]| < \feaNum \cdot \delta + \negl.
\]
We will first show that for all $i \in [\feaNum]$ it holds that $|\Pr[\classifier_\model(\Hyb_{i} = b)] - \Pr[\classifier_\model(\Hyb_{i-1}) = b] | < \delta + \negl$. 

Fix an index $i$ and fix the set $J_{i-1}$ computed in the first $i-1$ iterations of the loop in $\Hyb_i$. Let $\mu$ and $\mu'$ be the two approximations computed in the $i$-th iteration of the loop. We will distinguish 2 cases.
\begin{enumerate}
\item It holds that $|\mu' - \mu| \leq 3 \delta$
\item It holds that $|\mu' - \mu| > 3 \delta$
\end{enumerate}
Recall that $\mu'$ is an approximation of $E[\classifier_\model(D_{J_{i-1} \cup \{ i \} }(b))] = \Pr[\classifier_\model(D_{J_{i-1} \cup \{ i \} }(b)) = 1]$ with error $\delta$ and $\mu$ is an approximation of $E[\classifier_\model(D_{J_{i-1}}(b))] = \Pr[\classifier_\model(D_{J_{i-1}}(b)) = 1]$ with error delta. In the first case, we can conclude that
\[
| \Pr[\classifier_\model(D_{J_i}(b)) = 1] - \Pr[\classifier_\model(D_{J_{i-1}}(b)) = 1] | \leq |\mu' - \mu| + 2 \delta < 5 \delta.
\]
As in this case $\Hyb_{i}$ computes $\classifier_\model(D_{J_i }(b))$ and $\Hyb_{i-1}$ computes $\classifier_\model(D_{J_{i-1}}(b))$, it follows that
\[
| \Pr[\Hyb_{i+1} = 1] - \Pr[\Hyb_{i} = 1]| = | \Pr[\classifier_\model(D_{J_i }(b)) = 1] - \Pr[\classifier_\model(D_{J_{i-1}}(b)) = 1] | < 5 \delta.
\]
In the second case, the index $i$ will not be included in the set $J_i$ and thus $J_i = J_{i-1}$. Consequently, in this case it holds that $\Pr[\Hyb_{i+1} = 1] = \Pr[\Hyb_{i} = 1]$. Note that in this case it holds that
\[
| \Pr[\classifier_\model(D_{J_i}(b)) = 1] - \Pr[\classifier_\model(D_{J_{i-1}}(b)) = 1] | \geq |\mu' - \mu| - 2 \delta \geq \delta.
\]

It remains to show that $| \Pr[\Hyb_{\feaNum} = 1] - \Pr[\Hyb_{\feaNum + 1}] | \leq \negl$.

First note that to simulate either $\Hyb_{\feaNum}$ or $\Hyb_{\feaNum+1}$ we only need the public keys $\pk_i$ and in particular not the large private keys $\sk_i$.

First notice that by the way we constructed the set $\targetMSet = J_\feaNum$ it holds for all $i \in [\feaNum] \backslash \targetMSet$ that
\[
| \Pr[\classifier_\model(D_{J_{i-1} \cup \{i\}}(b)) = 1] - \Pr[\classifier_\model(D_{J_{i-1}}(b)) = 1] | \geq |\mu' - \mu| - 2 \delta \geq \delta.
\]
Noting that $D_{J_{i-1} \cup \{ i \}}(b)$ and $D_{J_{i-1}}$ only differ in the $i$-th feature, we can use $\classifier_\model$ to construct a distinguisher $\mc{D}_i$ which distinguishes encryptions of $0$ from encryptions of $1$ under $\pk_i$ with advantage $\delta$ as follows.
\begin{description}
\item[Distinguisher $\mc{D}_i(\ciph^\ast)$]:
\begin{itemize}
    \item For all $j \in J_{i-1}$ compute $\ciph_j \gets \Enc(\pk_j,1-b)$
    \item For all $j \in [\feaNum] \backslash (J_{i-1} \cup \{i\})$ compute $\ciph_j \gets \Enc(\pk_j,b)$.
    \item Set $\ciph_i \gets \ciph^\ast$.
    \item Set $\point' \gets (\ciph_1,\dots,\ciph_\feaNum)$.
    \item Compute and output $b' \gets \classifier_\model(\point')$
\end{itemize}
\end{description}

Clearly, if $\ciph^\ast$ is an encryption of $b$, then the sample $\point'$ constructed by $\mc{D}_i$ is a sample of $D_{J_{i-1}}$. On the other hand, if $\ciph^\ast$ is an encryption of $1 - b$, then the sample $\point'$ constructed by $\mc{D}_i$ is a sample of $D_{J_{i-1} \cup \{ i\} }$. It follows that
\[
\Adv(\mc{D}_i) = | \Pr[\classifier_\model(D_{J_{i-1} \cup \{i\}}(b)) = 1] - \Pr[\classifier_\model(D_{J_{i-1}}(b)) = 1] | \geq \delta.
\]
Consequently, by the key-knowledge security of $\PKE$ we have an extractor $\Ex$ such that $\Ex(\mc{D}_i,\st,\delta) = \sk_i$, except with probability $\delta$ over the choice of $\st$.

Consequently, using the extractors $\Ex(\mc{D}_i,\st,\delta)$ we can extract all $\sk_i$ for $i \in [\feaNum] \backslash J_\feaNum$. I.e. we can extract the string $(\sk_i)_{i \in [\feaNum] \backslash J_\feaNum}$ from $\st$. Noting that $\st$ is a string of size at most $\feaNum/2 \cdot \ell - \secp$ and all $\sk_i$ are uniformly random bit strings of length $\ell$, this implies that $|J_\feaNum| > n/2$ by Shannon's source coding theorem (as uniformly random strings cannot be compressed).

Thus, we have that $|J_\feaNum| > n/2$, except with negligible probability, which means that $| \Pr[\Hyb_{\feaNum} = 1] - \Pr[\Hyb_{\feaNum + 1}] | \leq \negl$.

This concludes the proof.
\end{proof}

\end{document}